\newtheorem{theorem}{Theorem}
\newtheorem{proposition}[theorem]{Proposition}
\newtheorem{lemma}[theorem]{Lemma}
\newtheorem{corollary}[theorem]{Corollary}
\newtheorem*{theorem*}{Theorem}
\theoremstyle{remark}
\newcommand\reallywidehat[1]{%
\savestack{\tmpbox}{\stretchto{%
  \scaleto{%
    \scalerel*[\widthof{\ensuremath{#1}}]{\kern-.6pt\bigwedge\kern-.6pt}%
    {\rule[-\textheight/2]{1ex}{\textheight}}
  }{\textheight}%
}{0.5ex}}%
\stackon[1pt]{#1}{\tmpbox}%
}
\newcommand\reallywidecheck[1]{%
\savestack{\tmpbox}{\stretchto{%
  \scaleto{%
    \scalerel*[\widthof{\ensuremath{#1}}]{\kern-.6pt\bigvee\kern-.6pt}%
    {\rule[-\textheight/2]{1ex}{\textheight}}
  }{\textheight}%
}{0.5ex}}%
\stackon[1pt]{#1}{\tmpbox}%
}
\newcommand\reallywidetilde[1]{%
\savestack{\tmpbox}{\stretchto{%
  \scaleto{%
    \scalerel*[\widthof{\ensuremath{#1}}]{\kern-.6pt\sim\kern-.6pt}%
    {\rule[-\textheight/2]{1ex}{\textheight}}
  }{\textheight}%
}{0.5ex}}%
\stackon[1pt]{#1}{\tmpbox}%
}
\newcommand{\R}{{\mathbb{R}}}
\newcommand{\beq}{\begin{equation}}
\newcommand{\eeq}{\end{equation}}
\newcommand{\beqa}{\begin{eqnarray}}
\newcommand{\eeqa}{\end{eqnarray}}
\newcommand{\beqas}{\begin{eqnarray*}}
\newcommand{\eeqas}{\end{eqnarray*}}
\newcommand{\ket}[1]{| #1 \rangle}
\newcommand{\ketbra}[2]{| #1 \rangle\langle #2 |}
\newcommand{\omax}{{\otimes_{max}}}
\newcommand{\omin}{{\otimes_{min}}}
\renewcommand{\phi}{\varphi}
\def\textsec{\textsection}
\def\intersect{\cap}
\def\interior{{\rm int~}}
\def\face{{\rm Face}}
\def\homega{\hat{\omega}}
\def\implies{\Rightarrow}
\def\id{{\rm id}}
\def\R{{\mathbb{R}}}
\def\C{{\mathbb{C}}}
\newcommand{\tr}{\text{tr\,}}
\newcommand{\aut}{\text{Aut}}
\DeclarePairedDelimiter{\inner}{\langle}{\rangle}
\DeclarePairedDelimiter{\inn}{\langle}{\rangle}
\newcommand{\fk}{\mathfrak{k}}
\newcommand{\tempout}[1]{{}}
\newcommand{\myheading}[1]{\noindent \emph{#1}---}
\begin{document}
\title{Self-duality and Jordan structure of quantum theory follow from homogeneity and pure transitivity}



\author{Howard Barnum}
\email{hnbarnum@aol.com}
\noaffiliation
\author{Cozmin Ududec}
\email{cozster@gmail.com}
\noaffiliation
\author{John van de Wetering}
\email{john@vdwetering.name}
\affiliation{University of Amsterdam}

\date{\today}

\begin{abstract}
Among the many important geometric properties of quantum state space are:
transitivity of the group of symmetries of the cone of unnormalized states on its interior (\emph{homogeneity}), identification of this cone with its dual cone of effects via an inner product (\emph{self-duality}), and transitivity of the group of 
symmetries of the normalized state space  on the pure normalized states (\emph{pure transitivity}).
Koecher and Vinberg showed that homogeneity and self-duality characterize  
Jordan-algebraic state spaces: real, complex and quaternionic 
quantum theory, spin factors, 3-dimensional octonionic quantum state space and
direct sums of these \emph{irreducible} spaces.  
We show that self-duality follows from homogeneity and pure transitivity. These properties have a more direct physical and information-processing significance than self-duality. We show for instance (extending results in \cite{BGW}) that homogeneity is closely related to the ability to steer quantum states. 
Our alternative to the Koecher-Vinberg theorem characterizes nearly the same set of state spaces: direct sums of isomorphic Jordan-algebraic ones, which may be viewed as composites of a classical system with an irreducible Jordan-algebraic one. {}
There are various physically and informationally natural additional postulates that are known to single out complex quantum theory from among these Jordan-algebraic possibilities. We give various such \emph{reconstructions} based on the additional property of local tomography.
\end{abstract}

\maketitle

Two fundamental concepts for an operational description of a physical theory are \emph{states} and \emph{effects}. Roughly speaking, the states of a system are all the ways in which a system can be prepared. The effects correspond to all the possible outcomes
of the various ways in which a system can be measured.

A particularly nice property of quantum theory is that we can treat states and effects on the same footing: the space of unnormalised quantum states corresponds to the Hermitian operators on a Hilbert space, while the space of unnormalised effects \emph{also} corresponds to the Hermitian operators. In particular, if we have a pure state $\ketbra{\psi}{\psi}$, the effect that tests for it is given by exactly the same operator $\ketbra{\psi}{\psi}$. This correspondence between states and effects is given by the inner product $\inn{A,B} = \tr(AB)$ that implements the Born probability $\tr(\rho E)$ for a state $\rho$ and effect $E$. 

This property that the sets of (unnormalized) states and effects are identified with each other by a probability-determining inner product is known as \emph{self-duality}. Among hypothetical physical theories, this is quite `rare'.
In fact, the celebrated Koecher-Vinberg theorem \cite{Koecher,VinbergHomogeneousCones} states that any system described by an ordered vector space of unnormalised states that is both self-dual and \emph{homogeneous} must be a Euclidean Jordan algebra, an algebraic generalisation of a standard quantum system. Homogeneity essentially states that the ordered vector space is maximally symmetric, with all strictly positive states (full-rank states in the quantum case) connected by order isomorphisms.

While self-duality is an interesting mathematical property with rich consequences, its operational meaning is currently not clear.  
In~\cite{MullerSD2012} it was shown that the more operational property of \emph{bit symmetry} implies (but is not equivalent to) self-duality; some conjunctions of strong but appealing operational  conditions implying (but again not equivalent to) self-duality were given in \cite{WilceRoyalRoad}, in a derivation of quantum theory via the Koecher-Vinberg theorem.  
In~\cite{BBLW08} it was shown that \emph{weak} self-duality is necessary for state teleportation through a copy of the teleported system. Weak self-duality states that the state and effect space are isomorphic, but  this isomorphism is not necessarily given by an inner product. Homogeneity can be shown to be closely linked to the ability to \emph{steer} arbitrary states, which gives it an operational interpretation~\cite{BGW}.

In this paper we show that homogeneity in combination with the property of \emph{pure transitivity} implies self-duality. Pure transitivity states that any two pure states can be mapped into each other by a reversible transformation, and hence has the clear operational interpretation that any two pure states of a system should be reversibly dynamically transformable into each other.  This figures crucially in statistical
mechanics---for example in von Neumann's derivation of the expression for quantum entropy \cite{vonNeumannBook}.
Pure transitivity, and its slightly stronger version, \emph{continuous} pure transitivity, are assumptions often used 
(cf. e.g. \cite{Hardy2001a,MasanesAxiom,MasanesEtAlEntanglementAndBlochBall}) in the theory of \emph{generalized probabilistic theories} (GPTs)~\cite{Barrett07}.
The assumption also follows from \emph{essential uniqueness of purification} in combination with with \emph{pure conditioning} (that pure measurements preserve pure states)~\cite{GiulioDer11}, two other assumptions often made in the field of reconstructions of quantum theory.

Our result provides an alternative to the Koecher-Vinberg theorem: any ordered vector space that is homogeneous and satisfies pure transitivity is order-isomorphic to a Euclidean Jordan algebra.
Thus it gives a derivation of self-duality from more operational assumptions.  Using the well-established 
fact  that the only Jordan algebras that allow locally-tomographic composites are the standard quantum systems, this gives us a new way to characterise quantum theory: it is the only theory of systems that are homogeneous, satisfy pure transitivity and have locally tomographic composites.  Mildly generalizing results from \cite{BGW} 
concerning Schr\"odinger's notion of \emph{steering} 
allows us to substitute \emph{universal self-steering} (the steerability of each system in the theory by a copy of itself), or even universal steerability by a system of the same dimension, for homogeneity in some of our results: 
for example, universal self (or same-dimensional) steering, continuous pure transitivity, and local tomography characterize quantum theory

\myheading{Generalised probabilistic theories}%
In a finite-dimensional GPT we have for every system a state space $\Omega$ and an effect space $E$, which are both convex sets~\cite{Barrett07}.  
We additionally have a probability function: for each state $\omega \in \Omega$ and $e\in E$ we assign a probability $\inner{e,\omega}\in [0,1]$ which respects the convex structure. 
We will assume that the states are separated by the effects, meaning that if $\inner{e,\omega_1} = \inner{e,\omega_2}$ for all effects $e\in E$, then $\omega_1=\omega_2$. 
We can then take the vector space $V$ of formal sums of states modulo equality of all effects, so that we can view $\Omega\subseteq V$. We will also assume that effects are separated by states, so that $E$ spans the dual space $V^*$.  As is standard, we will assume that $\Omega$ and $E$ are closed sets (a harmless, and useful, idealization from the operational point of view), and hence compact.  A \emph{pure effect} is defined as one which cannot be ``coarse-grained''; that is, it
cannot be expressed as a sum of nonzero effects that are not proportional
to each other.  (Note that this is not the same as extremality in the convex
set $E$.)
A (finitely indexed) \emph{measurement} is an indexed set of effects $e_i$, $i \in \{1,..n\}$ such 
that $\sum_i e_i = u$.  This
generalizes the notion of finitely indexed positive operator valued measure (POVM) from quantum theory.  (General effect-valued measures may also be defined but we do not need them here.)  
We henceforth impose on $E$ the condition that
for every $e \in E$, also $u-e \in E$.  This ensures that every effect
is part of at least one measurement, that with $e_1 = e, ~ e_2 = u-e$.  

\myheading{Ordered vector spaces}%
The space generated by the states of a GPT is an \emph{ordered vector space}, which is a real vector space $V$ equipped with a partial order $\leq$ that respects addition (if $a\leq b$, then $a+c\leq b+c$ for all $a,b,c\in V$) and positive scalar multiplication (if $a\leq b$, then $\lambda a \leq \lambda b$ for all $a,b\in V$ and $\lambda \geq 0$).%
\footnote{A partial order is transitive, reflexive, and antisymmetric (($x \ge y \text{ and } y \ge x)\implies x=y$).  Dropping antisymmetry gives a \emph{preorder}.}
The partial order of an ordered vector space is completely determined by the cone of positive elements $C := \{a \in V\,|\, a\geq 0\}$ (we will often refer to 
it as $V_+$). Conversely, any \emph{pointed cone} (subset $C$ closed under addition and positive scalar multiplication, such that $C\cap -C = \{0\}$) defines an ordered vector space.
For a GPT, the positive cone consists of the \emph{unnormalized} states: nonnegative multiples of elements of $\Omega$. 
In quantum theory this is the cone of positive-semidefinite matrices.
In finite dimension, a real vector space has a unique topology compatible with the 
linear structure (given for example by the linear isomorphism $V\cong \R^n$ and 
the usual topology thereon) so we can speak of the \emph{interior} of the cone. For a GPT, states in the interior correspond to those states $\omega$ for which $\inner{e,\omega} \neq 0$ for all nonzero effects $e\in E$, i.e. they are \emph{strictly positive}.  
In an operational setting we could never hope to eliminate all noise to ensure that
some measurement outcome should occur with zero probability. Interior states are, in this sense, adequate to describe observable phenomena.

\myheading{Pure and normalized states}%
The usual assumption of \emph{causality} of the GPT is equivalent to
the existence of a unique effect $u$ that satisfies $\inner{u,\omega} = 1$ 
for all states $\omega\in \Omega$.
If $V$ is ordered, we can order its dual $V^*$ by setting $f\leq g$ iff $\inner{f,a}\leq \inner{g,a}$ for all $a\geq 0$ in $V$. We can hence find a functional $u\in V^*$ that is in the interior of the dual cone. Using any such functional we can define a set of `normalized states' in $V$ by $C_u := \{a \geq 0 \,|\, \inner{a,u} = 1\}$.  These are precisely the \emph{bases} of the cone $V_+$: bounded (hence compact, since $C$ is closed) convex subsets $B$ 
of $V$ such that each nonzero element of $V_+$ is uniquely expressed as a nonnegative multiple of
an element of $B$.  
For a GPT system we recover the standard notion of normalized state if we take $u$ to be the unique causal effect: we have $C_u = \Omega$.   
In the case of quantum theory, this functional is the trace. The \emph{pure states} of a cone with chosen functional $u$ are then the  extremal elements of the convex set $C_u$, i.e. the elements that cannot be expressed as nontrivial
convex combinations of other elements of the set.
An ordered vector space $V$, or its positive cone, is called \emph{reducible} if $V = V_1 \oplus V_2$ as a vector space (with $V_1, V_2 \ne \{0\}$) 
and $x \oplus x_2 \ge y_1 \oplus y_2$ iff $x_1 \ge y_1 $ and $x_2 \ge y_2$, 
and \emph{irreducible} if no such decomposition exists.  Equivalently, for some (and hence, every) base $\Omega$ of the cone, all pure states of $\Omega$ are in $V_1$ or $V_2$.  Every ordered linear space (or cone) has an essentially unique decomposition into irreducible spaces (or cones).  
One may think of these as ``superselection sectors,'' and  information 
about which summand of the cone a state is in as classical information.

\myheading{Order-preserving maps}%
In a GPT, transformations between systems map the state spaces to each other, while preserving the convex structure. On the level of the corresponding ordered spaces, these transformations correspond to \emph{positive maps}: linear maps $f: V \rightarrow W$ such that $f(a) \le f(b)$ if (but not necessarily only if) $a \le b$, or equivalently $\Phi(V_+) \subseteq W_+$.  
An \emph{order isomorphism} $\Phi:V\to W$ is a linear bijection such that $\Phi(V_+) = W_+$; 
equivalently, a linear bijection satisfying $\Phi(a)\leq \Phi(b)$ iff $a\leq b$. 
 
A positive map $f:V\to W$ determines a positive map $f^*:W^*\to V^*$ via $f^*(e)(a) = e(f(a))$, called $f$'s \emph{dual}. We say a positive map is \emph{normalized} when $f^*(u_W) = u_V$, 
i.e.~when $f(C^V_u) \subseteq C^W_u$---whence if $f$ is an order-isomorphism, 
$f(C_u^V) = C_u^W$. A positive map is \emph{sub-normalized} when $f^*(u_W) \le u_V$. 
An order-isomorphism is an \emph{order-automorphism} when $V=W$.  
A normalized order isomorphism necessarily sends pure states of $C_u^V$ to pure states of $C_u^W$.
State transformations in a GPT correspond to normalized positive maps and reversible transformations to normalized order isomorphisms, while in sufficiently well-behaved GPTs, discrete instruments correspond to indexed
sets of subnormalized positive maps $\Phi_i$ that sum to a normalized one.  Such maps $\Phi_i$ can be interpreted as describing the change of state associated with a measurement result whose probability is $\inn{u,\Phi_i(\omega)}$.    
Up to rescaling, the (unnormalized) order-automorphisms are those GPT transformations $\Phi:V \rightarrow V$ that are probabilistically reversible.
A transformation $\Phi:V \rightarrow W$ is called \emph{probabilistically reversible} if
there is a sub-normalized positive 
map $\Phi^\sharp$ such that for all $\omega \in \Omega$, 
$\Phi^\sharp \circ \Phi(\omega) = p_\omega \omega$, where  $p_\omega$ is strictly 
positive.   When 
$\Phi$ is also sub-normalized, $p_\omega \le 1$.  Although the definition allows $p_\omega$ to be $\omega$-dependent, it cannot be since
$\Phi^\sharp \circ \Phi$ must preserve convex combinations.  Hence $\Phi^\sharp \circ \Phi = p \: \id$,
with $p>0$, so if $V$ and $W$ have the same dimension, $\Phi^\sharp = p \: \Phi^{-1}$
and $\Phi$ is an order-isomorphism.

\myheading{Self-duality, homogeneity, and pure transitivity}%
Three properties that a GPT state space can have especially interest us: self-duality, homogeneity, and pure transitivity.

A cone is \emph{self-dual} when the space $V$ can be equipped with an inner product $\inner{\cdot,\cdot}$ satisfying $a\geq 0$ iff $\inner{a,b} \geq 0$ for all $b\geq 0$. Such an inner product identifies the ordered vector space $V^*$ with $V$.
For quantum theory, $V$ consists of the self-adjoint operators on a Hilbert space, with $\Omega\subseteq V$ being the density operators. The effects $E\subseteq V^*$ are identified with the positive sub-unital operators in $V$, by the probability function and inner product $\inner{E,\omega} = \tr(E \omega )$.
Note that for self-duality it is not enough just to ask for $V$ and $V^*$ to be order-isomorphic.  We call that property \emph{weak self-duality}.  It is 
strictly weaker than self-duality: $V_+ \oplus V_+^*$ is always weakly self-dual but is only self-dual when $V_+$ is;  the cone with square base is an irreducible example \cite{BBLW08}, and there are also irreducible homogeneous examples 
of arbitrary rank \cite{IshiNomuraArbitraryRank}.
While there is an operational characterisation of weak self-duality related to 
the existence of conclusive  teleportation protocols in certain types of GPT composites~\cite{BBLW08}, we do not know of any compelling operational motivation for `strong' self-duality on its own.

A cone is \emph{homogeneous} when its automorphism group acts transitively on its interior.
That is, if $a,b\in $V are in the interior of the positive cone, then there exists an order isomorphism $\Phi:V\to V$ such that $\Phi(a) = b$. Intuitively, if a cone is homogeneous, it is `maximally symmetric', since on an order-theoretic level, every interior element is equivalent to any other.
In finite-dimensional quantum theory, an unnormalized state is in the interior of the cone iff it is full-rank, and hence invertible. Indeed, for two such operators $\rho$ and $\sigma$ we can define an order isomorphism $\Phi: 
A \mapsto \sqrt{\sigma} \sqrt{\rho^{-1}} A \sqrt{\rho^{-1}} \sqrt{\sigma}$, so that $\Phi(\rho) = \sigma$.
A cone is homogeneous if and only if its dual is~\cite{VinbergHomogeneousTheory}. As a result, in a homogenous cone any two choices of internal functional $u$ and $u'$ can be mapped into each other by an order isomorphism, so the cone has a unique base up to affine isomorphism. 
In the context of GPTs, homogeneity of the state space is equivalent to  
strictly positive states being probabilistically reversibly transformable into one another.

Homogeneity can also be operationally motivated as being necessary to uniformly \emph{steer} states of a system~\cite{BGW}, as we will explore later.
The third property, \emph{pure transitivity}, is actually a property of a compact convex set, such as $\Omega$. We say that $\Omega$ has pure transitivity when for any two pure states there is a normalized order-isomorphism of $V$ mapping the first to the second. This property is independent of how $\Omega$ is affinely embedded into $V$. Since for a homogeneous cone there is a unique base up to affine isomorphism, we will simply say that $V$ has pure transitivity as well.
Pure transitivity has the simple operational interpretation that the reversible dynamical possibilities of a theory should be rich enough that every pair of pure states of the same system should be reversibly transformable into one another.

We also consider a stronger notion: we say $\Omega$ has \emph{continuous} pure transitivity when for any two pure states $\omega_0$ and $\omega_1$ there is a continuous path $\omega_t$, $t\in [0,1]$ in state space, such that $\Phi_t(\omega_0) = \omega_t$ for some normalized order isomorphism $\Phi_t$.
In quantum theory, the normalized order isomorphisms correspond to unitary conjugations, 
$\rho \mapsto U \rho U^{-1}$, so quantum theory has continuous pure transitivity, since any unitary $U$ can be written as $e^{it H}$ for some self-adjoint $H$ and real $t$.  Continuous pure transitivity could be motivated by the requirement that the reversible transformations between pure states should be achievable by time-evolution.

\myheading{Jordan algebras}%
Before we can state and prove our main result, we need to introduce Euclidean Jordan algebras.
A Jordan algebra $(V,*,1)$ is a real vector space equipped with a commutative bilinear unital product $*$ that satisfies the identity $(a*a)*(b*a) = ((a*a)*b)*a$. It is \emph{Euclidean} if it is equipped with an inner product satisfying $\inn{a*b,c} = \inn{b,a*c}$.  
We will only consider Euclidean Jordan algebras (EJAs) in this paper.
The motivating example takes $V$ to be the space of self-adjoint matrices, and sets $A*B := \frac12 (AB+BA)$. Every Jordan algebra is a product of simple algebras (hence a direct sum of these as ordered vector spaces), which have been fully classified~\cite{JNW}. 
These are either real, complex or quaternionic spaces of self-adjoint matrices, or a type of algebra known as a spin factor (the precise definition of which won't be relevant for us), or the exceptional Albert algebra of the $3\times 3$ octonionic self-adjoint matrices. Each of these types of algebras is embeddable into an algebra of complex self-adjoint matrices (i.e.~quantum systems), except for the Albert algebra.
An EJA is ordered by setting $a\geq 0$ iff $a=b*b$ for some $b$. 
As an ordered vector space it is homogeneous and self-dual. The Koecher-Vinberg theorem establishes the converse.
\begin{theorem}[K{}oecher-Vinberg]
	Let $V$ be a self-dual and homogeneous ordered vector space. Then it can be equipped with a product making it a Euclidean Jordan algebra.
\end{theorem}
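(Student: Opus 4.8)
The plan is to realize $V_+$ as a \emph{symmetric cone} and to read the Jordan product off the Lie algebra of its automorphism group, using the self-dualizing inner product to supply an adjoint operation. Write $\langle\cdot,\cdot\rangle$ for the inner product witnessing self-duality, so that $\interior V_+$ is its own dual cone, and let $G := \Aut(V_+) \subseteq GL(V)$ be the (closed, hence Lie) group of order-automorphisms, which acts transitively on $\interior V_+$ by homogeneity. First I would note that self-duality makes $G$ stable under the adjoint $g \mapsto g\transpose$ taken with respect to $\langle\cdot,\cdot\rangle$: if $gV_+ = V_+$ then, since $V_+$ equals its dual cone, $\langle g\transpose a, b\rangle = \langle a, gb\rangle \ge 0$ for all $a,b \in V_+$, whence $g\transpose \in G$. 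Thus $\theta(g) := (g\transpose)^{-1}$ is a Cartan involution of $G$, and its differential splits $\mathfrak g = \mathfrak k \oplus \mathfrak p$, where $\mathfrak k = \{X \in \mathfrak g : X\transpose = -X\}$ is the skew part and $\mathfrak p = \{X \in \mathfrak g : X\transpose = X\}$ the self-adjoint part.

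Next I would fix a canonical unit. The characteristic function $\varphi(x) := \int_{\interior V_+} e^{-\langle x,y\rangle}\,dy$ is finite and smooth on $\interior V_+$, and the substitution $y \mapsto (g\transpose)^{-1}y$ gives the equivariance $\varphi(gx) = |\det g|^{-1}\varphi(x)$; moreover $\Psi := \log\varphi$ is convex (a log-convex Laplace transform). Its negative gradient $\iota(x) := -\nabla\Psi(x)$ maps $\interior V_+$ into itself (here self-duality, $V_+^{*}=V_+$, is used), is an involution, and has a unique fixed point $e$, which I take as the prospective Jordan unit; one then checks that the stabilizer $K := \{g\in G : ge = e\}$ coincides with $G \cap O(V)$, so $\mathfrak k = \operatorname{Lie}(K)$ and $\mathfrak p$ is its orthogonal complement. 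The crucial linear-algebraic step is that the orbit map $\mathfrak p \to V$, $X \mapsto Xe$, is a bijection: surjectivity is homogeneity (the orbit $G\cdot e$ is open, so $\mathfrak g\cdot e = V$, and $\mathfrak k\cdot e = 0$ leaves $\mathfrak p\cdot e = V$), and injectivity holds because $Xe=0$ with $X\in\mathfrak p$ makes $e^{tX}$ a self-adjoint positive element of $K\subseteq O(V)$, forcing $e^{tX}=\id$ and $X=0$. I then set $L(x)\in\mathfrak p$ to be the unique self-adjoint operator with $L(x)e = x$, and define $x * y := L(x)y$.

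With the product in hand, three of the four axioms are immediate. Bilinearity holds because $x\mapsto L(x)$ is linear and each $L(x)$ is linear; $e$ is a two-sided unit since $L(x)e = x$ and $L(e)=\id$ (as $\id\in\mathfrak p$ with $\id\,e = e$); and the Euclidean condition $\langle x*y,z\rangle = \langle y, x*z\rangle$ is exactly the self-adjointness of $L(x)\in\mathfrak p$. In particular the trilinear form $T(x,y,z) := \langle x*y,z\rangle$ is automatically symmetric in its last two slots.

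The hard part is commutativity, $L(x)y = L(y)x$, together with the Jordan identity. For commutativity I would invoke the convex potential $\Psi$: its third differential $D^3\Psi(e)$ is a totally symmetric trilinear form, the homogeneity $\Psi(\lambda x)=\Psi(x)-n\log\lambda$ pins $D^3\Psi(e)(e,\cdot,\cdot)$ to a multiple of $D^2\Psi(e)$ (giving the unit relation), and $K$-invariance identifies $D^3\Psi(e)$ with a nonzero multiple of $T$; since a third derivative is symmetric in all three arguments, $T(x,y,z)=T(y,x,z)$, i.e.\ $x*y = y*x$. With commutativity, $T$ is totally symmetric and the Jordan identity becomes the operator relation $[L(x),L(x*x)]=0$. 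To obtain that I would differentiate the $G$-equivariant involution $\iota$ at general points, reading its differential as governed by the quadratic representation $P(x) = 2L(x)^2 - L(x*x)$; the relations $P(x)\in G$ and $P(x)\,\iota(x)=x$ forced by equivariance then yield $[L(x),L(x*x)]=0$. I expect the main obstacle to be precisely this analytic-to-algebraic bridge—showing that the derivatives of the single potential $\Psi$ reproduce $T$ and that the inverse map $\iota$ encodes the quadratic representation—since that is where essentially all of the Jordan structure is concentrated.
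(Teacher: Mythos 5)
The paper does not prove this theorem at all: it is quoted as the classical Koecher--Vinberg result and simply cited to \cite{Koecher,VinbergHomogeneousCones} (the textbook treatment is \cite{FarautKoranyi}, Ch.~I--III). So there is no in-paper proof to compare against; what you have written is a sketch of the standard Koecher/Faraut--Koranyi argument via symmetric cones, and its overall architecture is correct: stability of $G=\Aut(V_+)$ under the adjoint, the Cartan splitting $\fg=\fk\oplus\fp$, the characteristic function and its equivariance, the $K$-fixed basepoint $e$, the bijection $\fp\to V$, $X\mapsto Xe$, and the definition $x*y:=L(x)y$ are all exactly the standard steps, and your arguments for the bijectivity of $X\mapsto Xe$ and for the unit and Euclidean axioms are correct as stated.

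Two places are softer than you acknowledge. First, commutativity does not need the third derivative of $\Psi$ at all, and your route to it has a gap: ``$K$-invariance identifies $D^3\Psi(e)$ with a nonzero multiple of $T$'' is not a valid inference --- $K$-invariance alone does not pin a trilinear form down to a scalar multiple of another (and on a reducible cone the proportionality constant can differ between irreducible summands). One can make the $D^3\Psi$ computation work by differentiating the equivariance $\Psi(e^{tX}x)=\Psi(x)-t\,\mathrm{tr}\,X$ along $X\in\fp$, but the clean argument is purely algebraic: for $x,y\in V$ one has $[L(x),L(y)]\in[\fp,\fp]\subseteq\fk=\mathrm{Lie}(G_e)$, so $[L(x),L(y)]e=0$, i.e.\ $x*y=L(x)L(y)e=L(y)L(x)e=y*x$. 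Second, the Jordan identity paragraph names the destination rather than the route: to ``read the differential of $\iota$ as governed by $P(x)=2L(x)^2-L(x*x)$'' you must first prove $D\iota(x)=-P(x)^{-1}$ with $P$ \emph{defined} by that formula, which is where essentially all the remaining work of the theorem lives (in \cite{FarautKoranyi} this occupies the core of Ch.~III, \textsection 3--5). You also assert without proof that $\iota$ is an involution with a unique fixed point and that $G_e=G\cap O(V)$; both are true but each needs an argument (strict convexity of $\Psi$ plus the equivariance for the former, a fixed-point/compactness argument for the latter). As a proof \emph{sketch} of a classical theorem the proposal is acceptable and follows the canonical route; as a proof it leaves the hardest step (the Jordan identity) essentially undone.
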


Note that simple algebras have irreducible positive cones, while the cone of a product
of algebras is the direct sum of their cones.  

\myheading{Main result}%
We are now ready to state and prove our main result.  
\begin{theorem}\label{theorem: main}
	Let $V$ be a homogeneous ordered vector space satisfying pure transitivity. Then $V$ is self-dual and hence is a Euclidean Jordan algebra.
\end{theorem}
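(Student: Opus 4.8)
The plan is to reduce first to the case of an irreducible cone and then to handle that case with the Lie-theoretic apparatus attached to a homogeneous cone. For the reduction, recall that $V$ has an essentially unique decomposition $V = \bigoplus_{i} W_i$ into irreducibles and that every order automorphism permutes the $W_i$ while acting within each isomorphism class. The extreme points of the base split as the disjoint union of the extreme points of the summands, so pure transitivity provides, for any two pure states lying in a common summand $W_i$, an order automorphism that must carry the relevant extreme point back into $W_i$, hence fixes that summand and restricts to an order automorphism of $W_i$ taking one to the other; likewise homogeneity descends to each $W_i$. Thus each summand is homogeneous with automorphism group transitive on its extreme rays, while a pure state of $W_i$ can be mapped to one of $W_j$ only through an order isomorphism $W_i \cong W_j$, so the summands are mutually isomorphic. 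Granting that each irreducible summand is self-dual, the orthogonal direct sum of the self-dualizing inner products makes $C=V_+$ self-dual; so it suffices to treat one irreducible summand.

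Assume then $C=V_+$ is irreducible. Let $G=\mathrm{Aut}(C)$ be the group of linear order automorphisms, a closed subgroup of $GL(V)$ and hence a Lie group, acting transitively on the interior $C^\circ$ by homogeneity. Fix $e\in C^\circ$; its stabilizer $K=G_e$ is compact, so averaging any inner product over $K$ yields a $K$-invariant inner product $\langle\cdot,\cdot\rangle$ for which $K\subseteq O(V)$. I would then reformulate the goal: self-duality with respect to $\langle\cdot,\cdot\rangle$ is equivalent to $C=C^*$ (the dual cone in this inner product), and since $g\mapsto g^*$ carries $\mathrm{Aut}(C)$ onto $\mathrm{Aut}(C^*)$, this is in turn equivalent to $G$ being self-adjoint, $g\in G\Rightarrow g^*\in G$. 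The task is therefore to promote orthogonality of $K$ to self-adjointness of all of $G$, and it is precisely here that pure transitivity must enter.

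The tool I would bring in is Koecher's characteristic function $\phi(x)=\int_{C^*}e^{-\langle x,y\rangle}\,dy$ on $C^\circ$, which is relatively $G$-invariant, $\phi(gx)=|\det g|^{-1}\phi(x)$, and whose logarithmic gradient $\psi(x)=-\nabla\log\phi(x)$ is a bijection $C^\circ\to (C^*)^\circ$ satisfying $\psi(gx)=(g^*)^{-1}\psi(x)$ and $\psi(\lambda x)=\lambda^{-1}\psi(x)$. Since $K$ fixes $e$ and acts orthogonally, $\psi(e)$ lies in the $K$-fixed subspace; the crux is to show that pure transitivity forces that subspace to be $\mathbb{R}e$, so that after rescaling $\psi(e)=e$ and hence $(C^*)^\circ=\psi(Ge)=(G^*)^{-1}e$. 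Comparing this with $C^\circ=Ge$ shows that self-duality $C=C^*$ reduces to $G$ and its adjoint group sharing the orbit of $e$, i.e. to the self-adjointness of $G$ identified above.

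The main obstacle, and the point where pure transitivity does the real work, is controlling the off-diagonal structure of the cone; the enlargement of the $K$-fixed subspace beyond $\mathbb{R}e$ and the failure of self-adjointness are two faces of the same defect. Concretely, in Vinberg's description of a homogeneous cone by a clan — a nonassociative algebra with a frame of primitive idempotents $c_1,\dots,c_r$ and a Peirce-type decomposition $V=\bigoplus_{i\le j}V_{ij}$ with $V_{ii}=\mathbb{R}c_i$ — the extreme rays through distinct $c_i$ can be interchanged by $G$ only if the associated Peirce dimensions agree, so transitivity of $G$ on all extreme rays should force the $\dim V_{ij}$ $(i<j)$ to coincide and the clan product to be commutative, i.e. the clan to be a Euclidean Jordan algebra. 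Establishing this equalization — equivalently, that the invariant inner product is self-dualizing, or that $G$ is self-adjoint — is the technical heart; once it is in hand, self-duality follows, and the Koecher–Vinberg theorem upgrades the conclusion to ``$V$ is a Euclidean Jordan algebra.''
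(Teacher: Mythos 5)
Your setup is reasonable as far as it goes: the reduction to an irreducible summand is legitimate (and your observation that the summands must be mutually isomorphic matches the paper's corollary about direct sums of identical simple EJAs), and recasting self-duality as self-adjointness of $G=\mathrm{Aut}(C)$ with respect to a $K$-invariant inner product, via Koecher's characteristic function and the map $\psi$, is a standard and correct way to pose the problem. But the argument stops exactly where the theorem begins. The entire content of the statement is that pure transitivity forces the $K$-fixed subspace to be $\R e$ (equivalently, that the clan is a Euclidean Jordan algebra), and at precisely that point you write only that transitivity on extreme rays ``should force'' the Peirce dimensions $\dim V_{ij}$ to coincide and the clan product to be commutative, and that ``establishing this equalization \ldots is the technical heart.'' No argument is supplied for either claim. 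Worse, the premise of that sketch is shaky: in a non-self-dual homogeneous cone the extreme rays are not simply the rays through the frame idempotents $c_i$ --- their structure is genuinely more complicated (this is exactly what Truong and Tun\c{c}el analyze) --- so ``$G$ acts transitively on the rays through the $c_i$'' does not obviously follow from pure transitivity; and even granting equal Peirce dimensions, commutativity of the clan would still require a separate proof. There is also an unaddressed mismatch between pure transitivity (transitivity of \emph{normalized}, i.e.\ base-preserving, order isomorphisms on pure states) and transitivity of all of $G$ on extreme rays; the paper has to do real work (invariance of Vinberg's canonical Riemannian metric at the unit) to identify the unit-preserving automorphisms with the base-preserving ones. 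As written, your proposal is a correct restatement of the theorem in Vinberg's coordinates, not a proof of it.

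For comparison, the paper closes the gap by a different and much shorter mechanism: inside Vinberg's $T$-algebra there is a canonical subspace $V^c$ (the kernel) whose cone $V^c\cap V_+$ is \emph{automatically} homogeneous and self-dual, and which Vinberg proved is invariant under all automorphisms fixing the $T$-algebra unit; the paper shows these coincide with the normalized automorphisms, and a result of Truong--Tun\c{c}el supplies at least one pure state of $V$ lying in $V^c$. Pure transitivity then carries that one pure state to every pure state of $V$ while staying inside $V^c$, so $V=V^c$ and self-duality is inherited rather than derived by classifying the clan. If you wish to complete your route instead, the ``equalization'' step is a genuine theorem about clans that you must actually prove, not a verification to be deferred.
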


Our proof is an application of Vinberg's theory of homogeneous cones~\cite{VinbergHomogeneousTheory} and their automorphism groups~\cite{Vinberg65a}, using also a result from \cite{TruongTuncel}. 
We sketch the proof here and present the details in the Appendix.  
In \cite[Ch. I, \textsec 4]{Vinberg65a}, Vinberg showed, by embedding $V$
as the Hermitian ($x=x^*$) subspace in a type of real algebra with involution 
${}^*$ equipped with canonical inner product $(.,.)$, called a $T$-algebra,  
that for a homogeneous
ordered vector space $V$ there is a subspace $V^c$ called the \emph{kernel} 
such that the cone $V^c_+ := V^c \intersect V_+$ is homogeneous and self-dual with respect to the restriction of the inner product to $V^c$.
He also showed that the kernel is invariant under order-isomorphisms 
of $V$ that preserve the $T$-algebra unit $u$ \cite[Theorem~3]{Vinberg65a}.  
We show that these order-isomorphisms are also the normalized order-isomorphisms,
i.e. the ones preserving the
base $\Omega := \{x \in V: (u,x)=1\}$.

From \cite[Theorem~2]{TruongTuncel} one
easily sees that there is at least one pure state $\omega_c$ of $V^c$ that is also a pure state of $V$.  By pure transitivity of $V$, there is for every pure state $\omega$ of $V$ a normalized order isomorphism $\Phi$ such that $\Phi(\omega_c) = \omega$. But since $V^c$ is invariant under normalized order isomorphisms, we also have $\omega \in V^c$. Thus $V$ and $V^c$ have the same pure states, so they are equal, and $V=V^c$ is self-dual.

This result has a couple of corollaries. 
For the first, note that not all Jordan-algebraic state-spaces have pure transitivity: it is easy to see that an EJA with pure transitivity must be a direct sum of $n$ isomorphic simple Jordan algebras, which in GPT terms corresponds to a (locally tomographic) composite of an $n$-state classical system with a simple Jordan system.  
\begin{corollary}
	Let $V$ be a homogeneous ordered vector space. It satisfies pure transitivity iff it is order-isomorphic to a direct sum of identical simple Euclidean Jordan algebras.
\end{corollary}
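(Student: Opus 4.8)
The plan is to prove both implications by leaning on Theorem~\ref{theorem: main} to reduce the hard work to the structure theory of Euclidean Jordan algebras together with the essential uniqueness of the irreducible decomposition stated above.

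For the forward direction, suppose $V$ is homogeneous and satisfies pure transitivity. By Theorem~\ref{theorem: main}, $V$ is self-dual, hence an EJA, and so decomposes as an ordered vector space into a direct sum $V = V_1 \oplus \cdots \oplus V_k$ of simple EJAs, each with irreducible cone. I would first recall that in a reducible ordered space the pure states are exactly those lying entirely within a single irreducible summand, so that the pure states of $V$ are the disjoint union of the (embedded) pure states of the $V_i$. The crux is then that any order automorphism $\Phi$ of $V$ permutes the irreducible summands: since order isomorphisms carry irreducible cones to irreducible cones, $\Phi$ sends the decomposition $V_1 \oplus \cdots \oplus V_k$ to another decomposition of $V_+$ into irreducible subcones, and essential uniqueness forces $\Phi(V_i) = V_{\pi(i)}$ for some permutation $\pi$, with $\Phi|_{V_i}\colon V_i \to V_{\pi(i)}$ an order isomorphism. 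Now, given pure states $p \in V_i$ and $q \in V_j$, pure transitivity supplies a normalized order isomorphism $\Phi$ with $\Phi(p) = q$; since $q = \Phi(p) \in \Phi(V_i) = V_{\pi(i)}$ while $q$ lies in $V_j$, we conclude $V_j = V_{\pi(i)} \cong V_i$. As every nonzero summand carries pure states, all the $V_i$ are mutually order-isomorphic, i.e. $V$ is a direct sum of identical simple EJAs.

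For the converse, suppose $V \cong W \oplus \cdots \oplus W$ is a direct sum of $n$ copies of a simple EJA $W$. Then $V$ is itself an EJA and hence homogeneous, so it remains only to verify pure transitivity. The pure states of $V$ are the pure states of the individual copies of $W$. I would exhibit two families of normalized order automorphisms: first, within a single copy, the Jordan automorphisms of $W$, which are unital (hence normalized) and act transitively on the primitive idempotents, i.e.\ the pure states, of a simple EJA; and second, the coordinate swaps exchanging two identical summands, which are order automorphisms of $V$. Composing a swap with an in-copy Jordan automorphism connects any pure state in one copy to any pure state in another, while the first family alone handles two pure states in the same copy. Hence $V$ has pure transitivity, completing the equivalence.

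The main obstacle is really the two structural facts underlying the argument. The first is that order automorphisms permute the irreducible summands, which I would derive from essential uniqueness of the irreducible decomposition together with the order-invariance of irreducibility. The second is the classical fact that the automorphism group of a simple EJA acts transitively on its primitive idempotents; this is standard in Jordan-algebra theory but deserves an explicit citation, and one should confirm that the relevant Jordan automorphisms genuinely count as normalized order isomorphisms in the GPT sense, so that they indeed witness pure transitivity.
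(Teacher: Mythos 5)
Your proof is correct and follows essentially the route the paper intends: the paper obtains this corollary from Theorem~\ref{theorem: main} and simply asserts that an EJA with pure transitivity must be a direct sum of isomorphic simple Jordan algebras, which is precisely the argument you spell out (pure states lie in single irreducible summands, and normalized order automorphisms permute those summands by the essential uniqueness of the irreducible decomposition, so pure transitivity forces the summands to be mutually isomorphic, with the converse supplied by in-summand automorphisms plus swaps). The one fact you flag as needing a citation---transitivity of the automorphism group of a simple EJA on its primitive idempotents---is covered by the paper's own reference to Corollary~IV.2.7 and Proposition~IV.3.2 of \cite{FarautKoranyi}, invoked there for the continuous version of the same statement.
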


Second, the stronger condition of continuous pure-state transitivity is 
satisfied by the simple Jordan algebras, and only these. 
All simple Jordan algebras have continuous pure-state transitivity---see, for example, the conjunction of Corollary IV.2.7 and Proposition IV.3.2 in \cite{FarautKoranyi}--- and pure states in distinct summands of a Jordan algebra cannot be related by a continuous path of pure states, since they lie in subspaces of $V$ that intersect only in $\{0\}$ (in fact, in any GPT system with continuous pure transitivity, the state space must be irreducible).

\begin{corollary}\label{cor:homogeneity-simple}
	Let $V$ be a homogeneous ordered vector space. It satisfies continuous pure-state transitivity iff it is order-isomorphic to a simple Euclidean Jordan algebra.  	
\end{corollary}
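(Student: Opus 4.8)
The plan is to prove Corollary~\ref{cor:homogeneity-simple} by combining Theorem~\ref{theorem: main} with a reduction from continuous pure transitivity to ordinary pure transitivity, and then a separate argument ruling out reducible spaces. First I would observe that continuous pure transitivity is formally stronger than pure transitivity: if $\omega_0$ and $\omega_1$ are any two pure states, then taking the endpoint $t=1$ of the continuous path $\omega_t$ supplied by continuous pure transitivity yields a single normalized order isomorphism $\Phi_1$ with $\Phi_1(\omega_0) = \omega_1$. Hence a homogeneous $V$ with continuous pure transitivity also has pure transitivity, so by Theorem~\ref{theorem: main} it is self-dual and is a Euclidean Jordan algebra, and in particular, by the Corollary above, it is order-isomorphic to a direct sum of $n$ identical simple EJAs. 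It then remains only to show that $n=1$, i.e.\ that the space is \emph{irreducible} (a single simple EJA), and conversely that every simple EJA does satisfy continuous pure transitivity.

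For the converse direction I would simply invoke the cited classical facts: as noted in the text, every simple EJA has continuous pure-state transitivity, which follows from the conjunction of Corollary~IV.2.7 and Proposition~IV.3.2 of \cite{FarautKoranyi}. This handles the ``if'' half with no further work, so the content of the corollary lies in the ``only if'' half, namely excluding the reducible case $n\ge 2$.

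For that, suppose toward a contradiction that $V = V_1 \oplus \cdots \oplus V_n$ with $n \ge 2$ is a direct sum of isomorphic simple EJAs, and pick pure states $\omega_0$ and $\omega_1$ lying in \emph{distinct} summands, say $\omega_0 \in V_1$ and $\omega_1 \in V_2$. By the characterization of reducibility recalled in the text, every pure state of $\Omega$ lies entirely within one of the summands $V_i$, and these summands are subspaces intersecting pairwise only in $\{0\}$. Continuous pure transitivity would give a continuous path $t \mapsto \omega_t$ of pure states from $\omega_0$ to $\omega_1$. The key point is that the set of pure states is partitioned into the (relatively closed, and in fact relatively clopen) pieces $\Omega \cap V_i$: each $\Omega \cap V_i$ is closed in the pure-state space, and the finitely many $\Omega \cap V_i$ are pairwise disjoint and cover it, so each is also open in the subspace topology on the pure states. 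A continuous path cannot cross from one clopen piece to another, so $\omega_t$ would be trapped in $\Omega \cap V_1$ for all $t$, contradicting $\omega_1 \in V_2$. This forces $n=1$, completing the proof.

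The main obstacle is making the topological ``trapping'' argument clean, i.e.\ verifying that the sets $\Omega \cap V_i$ really are relatively clopen in the space of pure states with the subspace topology inherited from $V$. This is where I would be most careful: one must check that membership in a fixed summand is a closed condition on pure states (immediate, since each $V_i$ is a closed linear subspace), and then use disjointness together with finiteness of the decomposition to upgrade ``closed'' to ``clopen.'' Once that is in hand, connectedness of the continuous image of $[0,1]$ does the rest, since a connected set meeting two disjoint clopen pieces is impossible.
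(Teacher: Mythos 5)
Your proposal is correct and follows essentially the same route as the paper: reduce continuous pure transitivity to pure transitivity (take $t=1$), invoke Theorem~\ref{theorem: main} and the preceding corollary to get a direct sum of identical simple EJAs, cite Faraut--Koranyi for the converse, and rule out $n\ge 2$ by noting that pure states in distinct summands cannot be joined by a continuous path of pure states. Your clopen/connectedness argument is just a careful spelling-out of the paper's one-line remark that the summands intersect only in $\{0\}$, and it is a valid way to make that step rigorous.
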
  

\myheading{Composite systems}%
The above results don't require any notion of composite system. By introducing composites, we can give an operational motivation for homogeneity, and give a full derivation of quantum theory 
by restricting Jordan-algebraic systems to just the complex matrix algebras.

Given two systems $A$ and $B$ in a GPT with corresponding state and effect spaces, a \emph{composite system} $AB$ corresponds to a state space $\Omega_{AB}$ and effect space $E_{AB}$, such that there is a bi-affine mapping on states $\otimes: \Omega_A\times \Omega_B\to \Omega_{AB}$ and similarly one on effects, such that $\inn{\omega_A\otimes \omega_B,e_A\otimes e_B} = \inn{\omega_A,e_A}\inn{\omega_B,e_B}$. These lift to bilinear positive mappings of the corresponding ordered vector spaces $V_A\times V_B \to V_{AB}$ and $V_A^* \times V_B^* \to V_{AB}^*$, which in turn lift to linear injections $V_A \otimes V_B \rightarrow V_{AB}$, $V_A^* \otimes V_B^* \rightarrow V_{AB}^*$.   
If the GPT is causal we furthermore require $u_{AB} = u_A\otimes u_B$, implementing the requirement that the combination of a measurement on $A$ and a measurement on $B$ is a measurement on $AB$. 
In this setting we can define the \emph{marginal} of a composite state $\omega_{AB}$ as the state $\omega_A$ defined by the probabilities $\inn{\omega_A,e_A} := \inn{\omega_{AB},e_A\otimes u_B}$ for any effect $e_A$ on $A$.

\myheading{Steering and homogeneity}%
We say a bipartite state $\omega_{AB}$ of a composite $AB$ \emph{steers} its marginal state $\omega_A$ on system $A$ when for any probabilistic ensemble $\omega_A = \sum_i p_i \omega_i$ of $\omega_A$, there exists a measurement ${e_1,\ldots, e_k}$ on $B$ such that observing $e_i$ on $B$ when the full system is in the state $\omega_{AB}$, results in the state $\omega_i$ on $A$.  That is, for any effect $a$ on $A$ we have $ \inn{\omega_{AB}, a\otimes e_i} = p_i\inn{\omega_i,a}$.
Schr\"{o}dinger thought it was a remarkable peculiarity of quantum theory, as compared to classical probability theory, that this `remote steering' can be done in quantum theory even though the ensembles of $\omega_A$ can consist of different sets of \emph{pure} states \cite{Schrodinger36}.  (Classically, the convex decomposition into pure states is unique, and steering into any ensemble, including ensembles of mixed states, is possible via a perfectly correlated bipartite state.)

We say a system $A$ \emph{steers a system} $B$ if for every state of $B$ 
there is a state of $AB$ that steers it.\footnote{If an $A$ exists that steers 
 $B$, we also say that $B$ can be \emph{uniformly} steered, because of the possibility, in principle at least, that every state $\omega$ of $B$ could be
steered by some state of some system $A_\omega$, but not by the same $A_\omega$ for all $\omega$, which we would call \emph{nonuniform} steering of $B$.}   
A theory has \emph{universal uniform steering} if for every 
system $B$ in the theory, there is a system $A$ that steers it.  It has \emph{universal self-steering} if every system can be steered by a copy of itself (i.e., a system isomorphic to it).
Quantum theory has universal self-steering.  So does classical theory. 

In the Appendix we prove the following theorem, which generalizes, and corrects an 
omission in the statement of, Proposition 5.6 in \cite{BGW}, giving an operational motivation for homogeneity as arising from steering.

\begin{theorem}\label{prop: universal steering implies homogeneity} 
In any theory that supports universal uniform steering  where the steering system $A$ is of the same dimension as the steered system  $B$, every
 irreducible finite-dimensional state space in the theory is homogeneous.
\end{theorem}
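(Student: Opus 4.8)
The plan is to use the steering hypothesis to manufacture, for each strictly positive (interior) state of a given irreducible state space $B$, an order isomorphism onto $V_B$ from a single fixed domain $V_A^*$, and then to compose two such maps to connect an arbitrary pair of interior states by an order automorphism of $V_B$.

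First I would fix an irreducible finite-dimensional state space $B$ and invoke the hypothesis to obtain a single system $A$ with $\dim V_A = \dim V_B$ that steers $B$. For a bipartite state $\omega_{AB}$ I would define the \emph{steering map} $\Gamma\colon V_A^* \to V_B$ by $\inn{\Gamma(f), b} = \inn{\omega_{AB}, f \otimes b}$ for effects $f$ on $A$ and $b$ on $B$; this lands in $V_B \cong V_B^{**}$ because the right-hand side is linear in $b$, and it satisfies $\Gamma(u_A) = \omega_B$, the $B$-marginal of $\omega_{AB}$. Since $\omega_{AB}$ is a genuine state and $f \otimes b$ is a positive effect on $AB$ whenever $f, b$ are positive, $\Gamma$ is a positive map, carrying the effect cone $V_A^{*+}$ into the state cone $V_B^{+}$.

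Next I would show that, when the marginal $\omega_B$ is interior, $\Gamma$ is in fact an order isomorphism. Because $\omega_B$ is interior, every normalized state $\sigma$ of $B$ occurs as a component in some ensemble decomposition of $\omega_B$: write $\omega_B = \lambda \sigma + (1-\lambda)\tau$ where $\tau$ lies just beyond $\omega_B$ on the segment from $\sigma$ through $\omega_B$, which remains in $\Omega$ for small $\lambda > 0$. Steering of this ensemble by a measurement $\{f_i\}$ on $A$ then yields $\Gamma(f_i) = p_i \sigma_i$ with $p_i > 0$, so $\Gamma(V_A^{*+})$ contains a positive multiple of every state and hence equals the full cone $V_B^{+}$. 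In particular $\Gamma$ is surjective; since $\dim V_A^* = \dim V_B$, it is a linear bijection, and then $\Gamma^{-1}(V_B^{+}) = V_A^{*+}$, so $\Gamma$ is an order isomorphism carrying $u_A$ to $\omega_B$.

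Finally, for two interior states $\omega, \omega'$ of $B$, uniform steering supplies bipartite states $\omega_{AB}, \omega'_{AB}$ over the \emph{same} $A$ with these marginals, hence order isomorphisms $\Gamma, \Gamma'\colon V_A^* \to V_B$ with $\Gamma(u_A) = \omega$ and $\Gamma'(u_A) = \omega'$. The composite $\Psi := \Gamma' \circ \Gamma^{-1}$ is then an order automorphism of $V_B$ with $\Psi(\omega) = \omega'$, which is exactly transitivity of the automorphism group on the interior, i.e.\ homogeneity of $B$. The same-dimension hypothesis is what is really doing the work: it upgrades the surjection $\Gamma$ (which steering alone provides) to a bijection, while the uniformity of the steering---one $A$ for every state of $B$---is what lets the two maps share the domain $V_A^*$ so that they can be composed. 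I expect the main obstacle to be this order-isomorphism step: verifying that steering into \emph{every} ensemble of an interior state forces $\Gamma$ onto the entire cone, and then confirming that equal dimension yields a two-sided order-preserving inverse rather than merely a positive surjection.
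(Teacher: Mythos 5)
Your proposal is correct and follows essentially the same route as the paper: your map $\Gamma$ is exactly the conditioning map $\hat{\omega}$, your ensemble argument for surjectivity onto $\face(\omega_B)=V_{B+}$ reproves Lemma 5.2 of \cite{BGW}, and the dimension count plus composition of two such order isomorphisms is precisely how the paper (correcting Proposition 5.6 of \cite{BGW}) obtains homogeneity. The only difference is that you reprove the cited lemmas inline rather than quoting them.
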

Note that continuous pure transitivity implies that state spaces are irreducible. The condition on the dimension follows from universal self-steering. We can hence combine this with Corollary~\ref{cor:homogeneity-simple} to arrive at the following.
\begin{corollary}
    In any GPT with universal self-steering (or even just universal steering by a 
    system of the same dimension) and continuous pure transitivity, the state spaces are those of simple Euclidean Jordan algebras.
\end{corollary}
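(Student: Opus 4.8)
The plan is to assemble this corollary directly from the three earlier results: Theorem~\ref{prop: universal steering implies homogeneity} (steering implies homogeneity for irreducible spaces), Corollary~\ref{cor:homogeneity-simple} (homogeneity plus continuous pure transitivity characterizes simple EJAs), and the observation, noted just above the statement, that continuous pure transitivity forces irreducibility. The strategy is to show that the steering and continuity hypotheses together supply \emph{homogeneity}, at which point Corollary~\ref{cor:homogeneity-simple} finishes the argument system by system.

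First I would fix an arbitrary finite-dimensional system $B$ in the theory and argue that its state space is irreducible. This is exactly where continuous pure transitivity is used: if the cone decomposed as $V = V_1 \oplus V_2$, then pure states lying in the distinct summands $V_1$ and $V_2$ could never be joined by a continuous path of pure states, since those summands meet only in $\{0\}$, contradicting continuous pure transitivity. Hence every state space in the theory is irreducible.

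Next I would verify that the steering hypothesis meets the dimension condition required by Theorem~\ref{prop: universal steering implies homogeneity}. Under universal self-steering each system $B$ is steered by a copy of itself, which trivially has the same dimension as $B$; this is subsumed by the weaker stated hypothesis of universal steering by a same-dimensional system. Either way the hypotheses of Theorem~\ref{prop: universal steering implies homogeneity} are satisfied, and combining this with the irreducibility established in the previous step, I conclude that the state space of $B$ is homogeneous. Finally, since $B$ is now both homogeneous and (by assumption) satisfies continuous pure transitivity, Corollary~\ref{cor:homogeneity-simple} yields that it is order-isomorphic to a simple Euclidean Jordan algebra. As $B$ was arbitrary, every state space of the theory is a simple EJA.

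I do not anticipate a genuinely hard technical obstacle, since all the analytic content is already packaged in the cited results; the one point that requires care is the \emph{order} of the logical dependencies. Theorem~\ref{prop: universal steering implies homogeneity} delivers homogeneity only for \emph{irreducible} spaces, so the irreducibility consequence of continuous pure transitivity is not an incidental remark but the essential hinge that allows the steering theorem to be invoked at all; the argument would collapse if one tried to apply the steering theorem before ruling out reducible decompositions. The only other subtlety is cosmetic: ensuring that ``universal self-steering'' is correctly recognized as a special case of ``universal steering by a same-dimensional system,'' so that the single application of Theorem~\ref{prop: universal steering implies homogeneity} covers both clauses of the hypothesis simultaneously.
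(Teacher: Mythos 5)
Your proposal is correct and follows essentially the same route as the paper: continuous pure transitivity forces irreducibility, the same-dimension (or self-) steering hypothesis satisfies the dimensional condition of Theorem~\ref{prop: universal steering implies homogeneity} to yield homogeneity, and Corollary~\ref{cor:homogeneity-simple} then gives the simple EJA conclusion. The paper's own justification is exactly this two-step combination, stated in the sentences immediately preceding the corollary.
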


\myheading{Reconstructing quantum theory}%
We have shown that the assumptions of homogeneity and pure transitivity lead systems in a GPT to be Jordan algebraic. However, they don't quite get us all the way to standard quantum systems.  To do so we can use the well-known observation that EJAs other than complex matrix algebras don't have `nice' composites. In particular, requiring the existence of \emph{locally tomographic} composites forces the EJAs to be complex matrix algebras, and hence standard quantum systems.
We say a composite system is locally tomographic if the product effects $e_A\otimes e_B$ separate all the (possibly entangled) states of $\Omega_{AB}$. This is equivalent to $\dim V_{AB} = \dim V_A \dim V_B$.
We say composites are \emph{purity preserving} when a composite of pure states is again a pure state, and similarly for pure effects. 

Before we prove this reconstruction of quantum systems, recall that a system has \emph{unrestricted effects} if $E$ is the entire order interval $[0,u]_{V^*} := \{x \in V^*: 0 \le x \le u\}$ in the dual cone; 
this interval consists of \emph{all} linear functionals taking probabilities as values when evaluated on states in $\Omega$.

\begin{theorem}\label{theorem: local tomography 1}
    Let $A$ be a system in a GPT where composites are locally tomographic and where every system satisfies homogeneity and continuous pure transitivity. 
    Then $V_A$ is a simple complex matrix algebra.
\end{theorem}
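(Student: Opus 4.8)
The plan is to reduce the statement to the classification of simple Euclidean Jordan algebras and then extract a purely numerical obstruction from local tomography. First I would observe that \emph{every} system in the theory---$A$, an arbitrary second system, and in particular the self-composite $AA$---satisfies homogeneity and continuous pure transitivity, so Corollary~\ref{cor:homogeneity-simple} makes each of them a simple EJA. Local tomography, $\dim V_{AA} = (\dim V_A)^2$, then forces the canonical injection $V_A \otimes V_A \hookrightarrow V_{AA}$ to be a linear isomorphism; hence the simple-EJA cone of $V_{AA}$ lives on $V_A \otimes V_A$, contains all product states, and has $u_{AA} = u_A \otimes u_A$ as its causal effect.

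The key step is to pin down $\mathrm{rank}(V_{AA})$ exactly. Writing $\tr$ for the canonical EJA trace (primitive idempotents have trace $1$), the rank of any EJA is $\tr$ of its Jordan unit; and for a self-dual EJA realized as a GPT the causal effect is the Jordan unit, so the pairing against it computes the trace, $\tr(x) = \inner{x,u}$. On the composite, $\inner{\cdot\,,u_{AA}}$ is therefore the EJA trace $\tr_{AA}$. Since $u_{AA} = u_A \otimes u_A$ and the probability pairing factorizes on products, this functional restricts on product vectors to $\tr_A \otimes \tr_A$, and product vectors span $V_A\otimes V_A$; evaluating at $u_{AA}$ gives $\mathrm{rank}(V_{AA}) = \inner{u_{AA},u_{AA}} = \tr_A(u_A)\,\tr_A(u_A) = r^2$, where $r=\mathrm{rank}(V_A)$.

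With the rank known, I would run the dimension count. Every simple matrix EJA of rank $k$ has $\dim = k + \binom{k}{2} d$ with multiplicity $d\in\{1,2,4\}$ for $\R,\C,\H$ (and $d=8$, $k=3$ for the Albert algebra), while spin factors have rank $2$. If $r\ge 3$ then $V_A$ is a matrix algebra or the Albert algebra and $V_{AA}$ is a rank-$r^2$ matrix algebra of multiplicity $c\in\{1,2,4\}$; combining the two dimension formulas with local tomography yields $(r+1)c = 2d + \tfrac12(r-1)d^2$, which for $d\in\{1,4\}$ (and for $d=8,\ r=3$) admits no $c\in\{1,2,4\}$, whereas $d=2$ gives $c=2$ for every $r$. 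If $r=2$ then $V_A$ is a spin factor and $V_{AA}$ is a rank-$4$ matrix algebra, so $(\dim V_A)^2\in\{10,16,28\}$; only $16$ is a perfect square, forcing $\dim V_A=4$ and $V_A = H_2(\C)$. In every case $V_A$ is a complex matrix algebra.

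The main obstacle is obtaining the \emph{exact} rank $r^2$ rather than merely the easy lower bound $\mathrm{rank}(V_{AA})\ge r^2$ coming from the $r^2$ perfectly distinguishable product states: with only the bound, the dimension count admits accidental coincidences---for instance the self-square of $H_2(\H)$ has dimension $36 = \dim H_8(\R)$, a legitimate rank-$8$ algebra that the bound alone cannot exclude. The delicate point in the rank computation is the identification of the composite's EJA trace with the pairing against the \emph{causal} effect $u_A\otimes u_A$, which amounts to choosing the self-dualizing inner product on $V_{AA}$ so that $u_{AA}$ is its trace functional; this is legitimate because homogeneity makes all interior functionals equivalent under order-automorphisms. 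Once this alignment is secured the trace factorizes, rank-multiplicativity follows, and the numerical obstruction finishes the argument. (Alternatively, one may simply invoke the established classification of locally tomographic Jordan composites.)
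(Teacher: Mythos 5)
There is a genuine gap, and it sits exactly at the step you yourself flag as the crux: obtaining the \emph{exact} rank $r^2$ for the composite. Your computation $\mathrm{rank}(V_{AA})=\langle u_{AA},u_{AA}\rangle=\langle u_A,u_A\rangle^2=r^2$ silently identifies the element of $V_{AA}$ corresponding to the functional $u_{AA}$ under the composite's self-dualizing inner product with $e_A\otimes e_A$, the tensor product of the factors' Jordan units. Homogeneity does let you choose a Jordan structure on $V_{AA}$ with $\langle\,\cdot\,,u_{AA}\rangle=\mathrm{tr}_{AA}$, and the probability pairing does give $\mathrm{tr}_{AA}(x\otimes y)=\mathrm{tr}_A(x)\,\mathrm{tr}_A(y)$ on product \emph{elements}; but the rank is $\mathrm{tr}_{AA}(e_{AA})$, and nothing you have said forces $e_{AA}=e_A\otimes e_A$ --- that would require the composite's self-dualizing inner product to restrict to the tensor product of the factors' inner products, which is precisely the kind of compatibility between the composite's Jordan structure and the factors' that the theorem is trying to establish. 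What your computation actually shows is $\mathrm{tr}_{AA}(e_A\otimes e_A)=r^2$, which is consistent with $\mathrm{rank}(V_{AA})>r^2$: in your own near-miss example, a hypothetical $H_8(\R)$ composite of two quaternionic bits with $u_{AA}=\mathrm{tr}$ would have $\mathrm{tr}_{AA}(e_A\otimes e_A)=4$ while $\mathrm{tr}_{AA}(e_{AA})=8$. So the argument is circular at the one point where it needed not to be; with only the honest lower bound $\mathrm{rank}(V_{AA})\ge r^2$ your dimension count (which is otherwise correct and matches the paper's) does not close.

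The paper closes this gap by a different and genuinely load-bearing ingredient: purity preservation of the tensor product, derived from Lemma 3 of \cite{BBLW06Cloning} (a bipartite state with a pure marginal is a product state), which applies here because local tomography plus unrestricted effects put the composite cone inside the maximal tensor product. One then takes a frame $\omega_1,\dots,\omega_r$ with dual pure effects $e_1,\dots,e_r$ and argues that the $e_i\otimes e_j$ are \emph{ray-extremal} effects summing to $u_{AA}$; since each pairs to $1$ with the corresponding pure state $\omega_i\otimes\omega_j$ and is dominated by $u_{AA}$, they must be exactly the $r^2$ orthogonal primitive idempotents $\omega_i\otimes\omega_j$, whose sum is therefore the Jordan unit. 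This is what legitimately yields $e_{AA}=e_A\otimes e_A$ and $\mathrm{rank}(V_{AA})=r^2$; it is the step your proposal is missing. (Your closing parenthetical --- invoking the known classification of locally tomographic Jordan composites --- would of course suffice, but it cites the result rather than proving it.)
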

\begin{proof}
We give a sketch here, with the details given in the Appendix.
Continuous pure transitivity and homogeneity give us that $V_A$ is a simple EJA. Let's suppose that this EJA has dimension $d$ and \emph{rank} $r$ (i.e.~the size of the largest collection of orthogonal pure states is $r$). By local tomography, the composite of $A$ with itself must then have dimension $d^2$.
Together with local tomography, we can then use Lemma 3 of~\cite{BBLW06Cloning} to show that that a composite of pure states is again pure. 
For the purpose of studying the structure of the composite 
state space, we consider the systems with $A$ and $B$'s 
state spaces and unrestricted effects, so that 
the same is also true for pure effects in such systems.  Using
this it is straightforward to show that the composite of a rank $r$ EJA with itself, must result in a rank $r^2$ EJA. These conditions of a dimension $d^2$ simple EJA with rank $r^2$ turn out to uniquely pinpoint the complex matrix algebras using simple dimension-counting arguments.
\end{proof}

We can trade in the assumption of local tomography for purity preservation, to characterize a slightly broader class of systems.  
\begin{theorem}\label{thm:real-and-complex}
    Let $A$ be a system in a GPT satisfying homogeneity and continuous pure transitivity, with unrestricted effects, and suppose that composites preserve purity
    (i.e.~a composite of pure states or pure effects is pure).
    Then $V_A$ is either a complex or real matrix algebra.
\end{theorem}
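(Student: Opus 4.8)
The plan is to run the argument of Theorem~\ref{theorem: local tomography 1} in parallel, but with the assumed purity preservation of composites playing the role that local tomography played there, so that the dimension \emph{equality} $\dim V_{AA}=(\dim V_A)^2$ is weakened to the \emph{inequality} $\dim V_{AA}\ge(\dim V_A)^2$. First I would apply Corollary~\ref{cor:homogeneity-simple}: since $A$ (and, because the hypotheses of homogeneity and continuous pure transitivity are taken to hold for the systems of the GPT as in Theorem~\ref{theorem: local tomography 1}, also its self-composite $AA$) is homogeneous and continuously pure-transitive, both $V_A$ and $V_{AA}$ are simple EJAs; continuous pure transitivity moreover forces irreducibility, so these are genuinely simple. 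By the classification of simple EJAs it then suffices to exclude the cases that $V_A$ is a quaternionic matrix algebra $H_r(\mathbb{H})$ with $r\ge 2$, the exceptional Albert algebra $H_3(\mathbb{O})$, or a spin factor other than the rebit $H_2(\mathbb{R})$ or qubit $H_2(\mathbb{C})$; these, together with the real and complex matrix algebras, exhaust the simple EJAs up to the low-rank coincidences $J^2\cong H_2(\mathbb{R})$ and $J^3\cong H_2(\mathbb{C})$.

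The key lemma is that purity preservation fixes the rank of the composite. Writing $r$ for the rank of $V_A$ and $\{p_1,\dots,p_r\}$ for a Jordan frame of orthogonal primitive idempotents with $\sum_i p_i=u_A$, the $r^2$ elements $p_i\otimes p_j$ are primitive idempotents (pure states, by purity preservation), are mutually orthogonal (their pairwise transition probabilities factor as $\inn{p_i,p_k}\inn{p_j,p_l}=\delta_{ik}\delta_{jl}$, using multiplicativity of the product pairing together with orthogonality of the $p_i$), and sum to $u_A\otimes u_A=u_{AA}$. Hence they form a Jordan frame of $V_{AA}$, so $\mathrm{rank}(V_{AA})=r^2$. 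Combined with the canonical injection $V_A\otimes V_A\hookrightarrow V_{AA}$, which gives $\dim V_{AA}\ge(\dim V_A)^2=:d^2$, we conclude that $V_{AA}$ is a simple EJA of rank $r^2$ and dimension at least $d^2$.

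Next I would run the dimension count. For $r\ge 2$ one has $r^2\ge 4$, so $V_{AA}$ is a matrix algebra $H_{r^2}(\mathbb{K})$ with $\mathbb{K}\in\{\mathbb{R},\mathbb{C},\mathbb{H}\}$, of dimension at most $\dim H_{r^2}(\mathbb{H})=r^2(2r^2-1)$. If $V_A=H_r(\mathbb{H})$ then $d^2=r^2(2r-1)^2>r^2(2r^2-1)$ for every $r\ge 2$, contradicting $\dim V_{AA}\ge d^2$; if $V_A$ is the Albert algebra then $d^2=729$ already exceeds the largest rank-$9$ dimension $\dim H_9(\mathbb{H})=153$; and if $V_A=J^k$ is a spin factor then $d^2=(k+1)^2$ exceeds $\dim H_4(\mathbb{H})=28$ as soon as $k\ge 5$. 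Each of these cases is therefore impossible, while for $V_A$ real or complex the same bound is consistent (indeed $H_{r^2}(\mathbb{R})$ and $H_{r^2}(\mathbb{C})$ accommodate it), so these cases survive, as they must.

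The one case that escapes this crude count, and which I expect to be the main obstacle, is the four-ball spin factor $J^4$ (dimension $5$, rank $2$): here $d^2=25$ and the bound only forces $V_{AA}$ to be a rank-$4$ simple EJA of dimension at least $25$, which $H_4(\mathbb{H})$ (dimension $28$) satisfies. To close this gap I would bootstrap, using that $V_{AA}$ is itself a system of the GPT meeting all the hypotheses. Having already excluded every quaternionic matrix algebra of rank $\ge 2$ (in particular $H_4(\mathbb{H})$), and noting that no spin factor or Albert algebra has rank $4$, the only rank-$4$ simple EJAs that can occur are $H_4(\mathbb{R})$ and $H_4(\mathbb{C})$, of dimensions $10$ and $16$. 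But $V_{AA}$ has rank $4$ with $\dim V_{AA}\ge 25>16$, a contradiction; hence $J^4$ is excluded as well and $V_A$ must be a real or complex matrix algebra. The delicate point throughout is precisely the legitimacy of treating the composite $V_{AA}$ as a simple EJA obeying the same hypotheses, which is what licenses both the rank computation and the bootstrapping step; this is where I would take most care to ensure the GPT-wide validity of homogeneity and continuous pure transitivity is genuinely available.
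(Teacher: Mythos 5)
Your proposal is correct and follows the same strategy as the paper's proof: reduce to a simple EJA via Corollary~\ref{cor:homogeneity-simple}, use purity preservation of states and effects to show the composite $AA$ is a simple EJA of rank $r^2$ and dimension at least $d^2$, and then eliminate the non-real, non-complex cases by comparing $d^2$ against the possible dimensions of rank-$r^2$ simple EJAs. Where you go beyond the paper is the spin factor $J^4$ (dimension $5$, rank $2$): the paper dismisses all spin factors of dimension $>4$ as being ``ruled out in a similar way,'' but as you correctly observe, the crude inequality $\dim V_{AA}\ge 25$ is compatible with $V_{AA}\cong H_4(\H)$ (dimension $28$), so the bare dimension count does not close this case. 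Your bootstrap --- treating $V_{AA}$ as another system of the GPT subject to the same hypotheses, so that the already-established exclusion of quaternionic matrix algebras applies to it, leaving only the too-small $H_4(\R)$ and $H_4(\C)$ --- is a legitimate and, as far as the written argument goes, necessary repair. It does require reading the hypotheses as holding theory-wide (so that composites of composites exist and satisfy homogeneity, continuous pure transitivity, and purity preservation), but the paper's own proof already needs $V_{AA}$ to be a simple EJA, so this reading is the intended one. Your caution about that point is well placed.
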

For the proof of this statement and several other related results, we refer the reader to the Appendix.
The requirement of having unrestricted effects might seem strong. However, in our setting it follows from some reasonable assumptions: if we are working in a compositional framework, so that the reversible transformations giving pure transitivity also allow us to transform effects, then including in $E$ at least 
one effect that is pure in the unrestricted cone and takes the value $1$ on some pure state suffices to get all such pure effects, due to the one-to-one correspondence to pure states.  This gives unrestricted effects.

\myheading{Conclusion and discussion}%
We gave an alternative to the seminal Koecher-Vinberg theorem, in which we exchanged the assumption of self-duality for the more operationally meaningful pure transitivity axiom, narrowing the class of Jordan algebras characterized slightly, to products of isomorphic ones. Using our operational derivation of homogeneity from steering, this gave us a simple operationally meaningful derivation of Jordan state spaces. We then combined this with local tomography to give a compact derivation of quantum theory.  We could also have used \emph{energy observability} \cite{BMU} or similar requirements 
relating the generators of reversible dynamics to conserved observables in the spirit of
Noether's theorem, to narrow things down 
to quantum theory, as in \cite{BMU}.  This is discussed further in the Appendix.  Since energy observability is a requirement that the symmetries of the pure states be generated by observables that they conserve, and homogeneity and pure transitivity are also statements
about the symmetries of a system, this characterizes quantum theory entirely in terms of properties of its symmetries.

Homogeneity is about symmetries of the set of 
mixed states, in fact, states that are `maximally' mixed in some sense, while pure transitivity is about symmetries of the set of pure states. 
Each condition requires all states in the set to be mutually accessible from each other via the symmetry group of the set---i.e., to be a single orbit of the group, and in this sense ``maximally symmetric''.  This is sometimes justified by an argument that if the states were not mutually accessible, they should not be considered to belong to the same system.  But the operational formalism makes \emph{all} states mutually accessible by operationally reasonable
transformations (positive maps).  What is special about our conditions is that the mutual accessibility is required to be via 
\emph{symmetries}. 
The group structure means they are 
\emph{reversible}---with unit probability in the pure-state case, with nonzero probability in the mixed case.  The two symmetry conditions
are thus requirements that the reversible transformations of our systems be very powerful, so it is natural that they have strong implications for the structure of systems, bringing them into a class---the Jordan systems---not much larger than, and sharing many important properties with, standard quantum theory, and that they are closely related to the ability to perform information-theoretic protocols.

\myheading{Acknowledgments}%
HB thanks the Department of Mathematical Sciences and the QMATH program (supported in part by the Villum Foundation) at the University of Copenhagen, for support as a Guest Professor while some of this work was done, and the International Institute of Physics, Natal, for support at a workshop where additional work was done.  HB and CU thank 
Perimeter Institute, where initial discussions took place; work at Perimeter Institute is supported in part by the Government of Canada through Industry
Canada and by the Province of Ontario through the Ministry of Research and Innovation. 


\bibliographystyle{siam}

\begin{appendix}

\section{Proofs}

\subsection{Preliminaries} 
 
A face of a cone $V_+$ is a subset $F$ such that for each $x \in F$, 
the only way to express $x$ as a nonnegative sum of elements in $V_+$ is if
those elements are also in $F$.  The \emph{face generated} by $x \in 
V_+$, written $\face{(x)}$, is the smallest face of $V_+$ containing $x$.  If $x$ is in 
the relative interior of a face $F$ ($F$'s interior with respect to its linear span)
then $\face(x) = F$.

An \emph{extremal ray} of $V_+$ is a one-dimensional face.  
Equivalently, for any fixed base $\Omega$
of  $V_+$, each extremal ray of $V_+$ is the set $\R_+\omega$ of nonnegative multiples
of some pure state $\omega$ of $\Omega$, or the null ray $\{0\}$.  
The non-null extremal rays are in bijection with the pure states of $\Omega$, and $V_+$ is 
generated by its extremal rays (indeed by the pure states of any fixed base) 
via nonnegative linear combinations.  

We can also define a face of a compact convex set $\Omega$ as a subset 
$F$ such that if $x \in F$ is a convex combination of elements of $\Omega$, 
those elements are also in $F$.  The faces of a base $\Omega$ for a cone 
$V_+$ are in bijection with 
the non-null faces of $V_+$ via $\Omega \supset F \mapsto \R_+ F \subset V_+$.

\subsection{Proof of main theorem}

In \cite{VinbergHomogeneousTheory}, Vinberg inductively constructed all open homogeneous cones (up to isomorphism)
as precisely the cones of elements of the form $tt^*$ where $t$ is an 
upper triangular element in a type of nonassociative matrix algebra he called a 
$T$-algebra.  This real algebra is a square matrix $T$ of real vector spaces 
$T_{ij}$, with the diagonal spaces $T_{ii}$ isomorphic to $\R$, and
the off-diagonal ones equipped 
with involutions $x \mapsto x^*$, such that $T_{ji} = T_{ij}^*$.  The (bilinear)  
$T$-algebra product is given by specifying its values on $T_{ij} \times T_{jk}$, which are
required to 
lie in $T_{ik}$, and extending it by bilinearity and the usual matrix multiplication rule. 
Transposing the matrix and taking involutions within each $T_{ij}$ is an involution 
$x \mapsto x^*$ of $T$. T-algebras satisfy additional conditions that we need not describe.  The elements $tt^*$, and hence the
cone, generate the \emph{hermitian} subspace $T^H$, which we'll call $V$,
of elements satisfying $x=x^*$. 
Vinberg constructed homogeneous cones and their $T$-algebras
beginning with the one-dimensional cone
$\R$ ordered by $\R_+$, whose $T$-algebra is $\R$, and inductively adding one-dimensional diagonal algebras, along with the associated off-diagonal algebras. 

A $T$-algebra has a canonical inner product
defined by $(a,b) = Sp(a,b)$, where $Sp(x) := \sum_i x_{ii} \dim{T_{ii}}$.
In \cite[Ch.~I~\textsec 2-3]{VinbergHomogeneousTheory}, Vinberg constructs a canonical Riemannian
metric $g$ on the interior  $V_+^\circ$ of a homogeneous cone, and shows
it is invariant under order automorphisms $\Phi$. That is, for any $p \in V_+^\circ$
and $x,y$ in the tangent space at $p$, and identifying the
tangent space with the ambient space via the Riemannian metric,
$g_{\Phi(p)}(\Phi(x), \Phi(y)) = g_p(x,y)$. 
The  
canonical inner product is equal to $g_u$, the Riemannian metric  at
the $T$-algebra unit $u$~ 
\cite[II, Eq. (10) and III, Eq. (34)]{VinbergHomogeneousTheory};
see e.g.~\cite{Tsuji} for a more compact presentation.  
For $\Phi$ such that $\Phi(u) = u$, 
the metric's invariance at $u$ states that $g_u(\Phi(x), \Phi(y)) = 
g_u(x,y)$, i.e the canonical inner product is invariant 
under the group of unit-preserving automorphisms. 
Hence, the dual of such a transformation with respect to this inner
product is its inverse, so the group of unit-preserving transformations
is identical, as a group of linear operators on V, to the
group of its duals. As we have noted, the latter is the group of
transformations preserving the base $\Omega := \{x \in V_+: (u,x)=1\}$ 
of normalized states.

In \cite[Ch.~I~\textsec 4]{Vinberg65a}, Vinberg showed that 
there is a subspace $V^c$ of $V$ called the \emph{kernel} (the Hermitian part of the kernel 
$T^c$ of the $T$-algebra), 
such that the cone $V^c \intersect V_+$ is homogeneous and self-dual.  
As part of its definition the kernel of $T_c$ 
contains the diagonal of the $T$-algebra.  The diagonal belongs to $T^H = V$, hence $V^c$ is nonzero if $V$ is. Given a base $\Omega$ of $V$, its restriction $\Omega_c := V^c \intersect \Omega$ is a base of $V^c_+$.  
As Theorem 3 in 
\cite{Vinberg65a}, Vinberg showed that $V^c$ is invariant under order isomorphisms of $V$ preserving the $T$-algebra unit $u$, i.e.~if $\Phi(u) = u$, then $\Phi(V^c) = V^c$. Since we have shown that 
the order-preserving isomorphisms are also the 
base-preserving ones, the kernel is invariant under the normalized (i.e. base-preserving) order-automorphisms. 

Vinberg was concerned with 
\emph{open} homogeneous cones, whose closures are what we have been calling homogeneous cones. 
However, Vinberg's construction was adapted by Truong and Tun\c{c}el~\cite{TruongTuncel} to describe the extremal rays of closed homogeneous cones.     
From Theorem 2 of~\cite{TruongTuncel} one sees that the ray $\R_+$ in the diagonal algebra adjoined in an induction step of Vinberg's 
construction (called in \cite{TruongTuncel} the Siegel cone construction)  is an  example of such an extremal ray.  
(This is the
ray $\{(0,0,y): y \in \R_+\}$, referred to as its representative $(0,0,1)$,
 in the theorem.)
So its normalized representative,
call it $\omega_c$, which lies in $V^c$ by virtue of being diagonal, is pure in 
$V_+$, not only in $V^c_+$.\footnote{It is not generally the case that rays pure in the
intersection of a subspace with a cone are pure in the larger cone.}

Now let $\omega$ be any normalized pure state of $V$. 
Because $V$ satisfies pure transitivity, there is a normalized order isomorphism $\Phi$ such
that $\Phi(\omega_c) = \omega$.  Because $V^c$ is invariant under normalized order isomorphisms, 
$\omega\in V^c$.  Since $\omega$ was arbitrary, the normalized pure states of $V$ and $V^c$ coincide,
whence $V = V^c$. So $V$ is self-dual, and the theorem is proved.

\subsection{Proofs concerning steering}

We remind the reader that the definitions of a bipartite state steering its marginal, uniform steering, universal steering, and universal self-steering are given in the section of the main text preceding Theorem \ref{prop: universal steering implies homogeneity}.

Our goal here is to prove that uniform steering implies homogeneity. To do so, we will use results from~\cite{BGW}. However, they assume local tomography, which we don't, and we include a condition that they omitted in their Proposition 5.6 because, 
although we do not know whether the proposition is true without it, it appears necessary  to complete the argument in \cite{BGW}. 

It is convenient to represent probabilistic ensembles $p_i, \omega_i$ 
of normalized states of a system as sets of unnormalized states $\omega_i$, each of which can be interpreted as a normalized state $\reallywidetilde{\omega_i} := \omega_i/u(\omega_i)$ times a probability $p_i = u(\omega_i)$.  The average state is then $\omega := \sum_i p_i \reallywidetilde{\omega_i} = \sum_i \omega_i,$ and we say $\omega_i$ are
\emph{an ensemble for} $ \omega$.     

  Each bipartite state $\omega_{AB}$ has an associated \emph{conditioning map} $\hat{\omega}: V_A^* \rightarrow V_B$ 
that takes effects $e$ to unnormalized states $\hat{\omega}(e)$, which we also 
write as $\omega_B^e$.  It is 
defined by the condition that for all effects $f \in V_B^*$, $\omega_B^e(f) (\equiv \hat{\omega}(e)(f)) := \omega(e \otimes f)$.    It is a positive linear map with $\homega(u_A) = \omega_B$.  
For \emph{any}
measurement $\{e^i_A\}$ on $A$, the unnormalized
conditional states $\omega_B^{e_i}$ are an ensemble for $\omega_B$.  As this
suggests, the map $\homega$ is useful for studying steering.  Indeed, the following is
essentially a restatement of the definition of a bipartite state $\omega$'s steering its marginal, in terms of the map $\homega$.

\begin{proposition}\label{prop: steering reformulated}
$\omega^{AB}$ is steering for its marginal $\omega_B$ iff for every ensemble 
$\omega_B^i$ for $\omega^B$, there exists a measurement $e^i_A$ on $A$ such 
that $\homega(e_A^i) = \omega_B^i$.
\end{proposition}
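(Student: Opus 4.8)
The plan is to recognize that, as the surrounding text indicates, the proposition is essentially a restatement of the definition of steering, so the proof consists of translating the definition from the normalized-ensemble language in which it was phrased into the conditioning-map language of the statement. First I would write out what it means for $\omega^{AB}$ to steer its marginal $\omega_B$: for every decomposition $\omega_B = \sum_i p_i \omega_i$ into normalized states $\omega_i$ with weights $p_i$, there is a measurement $\{e_A^i\}$ on $A$ (so $\sum_i e_A^i = u_A$) such that $\inn{\omega^{AB}, e_A^i \otimes f} = p_i \inn{\omega_i, f}$ holds for every effect $f$ on $B$. Using the defining relation of the conditioning map, the left-hand side is $\omega^{AB}(e_A^i \otimes f) = \homega(e_A^i)(f)$, while the right-hand side is $(p_i \omega_i)(f)$. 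Writing $\omega_B^i := p_i \omega_i$ for the corresponding unnormalized state, the steering equation for a fixed decomposition becomes $\homega(e_A^i)(f) = \omega_B^i(f)$ for all $f$.

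The key step is then to pass from an equality that holds on every effect $f$ to an equality of vectors in $V_B$. Since the effects span $V_B^*$ and the states are separated by the effects, two elements of $V_B$ that agree on all effects coincide; hence $\homega(e_A^i)(f) = \omega_B^i(f)$ for all $f$ is equivalent to $\homega(e_A^i) = \omega_B^i$ in $V_B$. This yields, for each fixed decomposition, the equivalence between the steering condition and the existence of a measurement realizing the $\omega_B^i$ through $\homega$.

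It remains to match the quantifier ``for every ensemble'' on the two sides. The normalized decompositions $\{(p_i, \omega_i)\}$ of $\omega_B$ are in bijection with the unnormalized ensembles $\{\omega_B^i\}$ summing to $\omega_B$, via $\omega_B^i = p_i \omega_i$, $p_i = u_B(\omega_B^i)$ and $\reallywidetilde{\omega_B^i} = \omega_i$ --- precisely the identification set up in the paragraph preceding the proposition --- so ranging over all decompositions is the same as ranging over all ensembles for $\omega_B$. A useful consistency check, which also shows that no extra hypothesis is needed, is that for any measurement $\{e_A^i\}$ the linearity of $\homega$ together with $\homega(u_A) = \omega_B$ forces $\sum_i \homega(e_A^i) = \homega\big(\sum_i e_A^i\big) = \homega(u_A) = \omega_B$; thus the states produced by any measurement automatically form an ensemble for $\omega_B$, so the ``total'' constraints $\sum_i e_A^i = u_A$ and $\sum_i \omega_B^i = \omega_B$ are compatible. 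I expect no genuine mathematical obstacle here: the only substantive ingredient beyond unwinding definitions is the separation of states by effects used to upgrade ``agrees on every $f$'' to ``equal in $V_B$'', and the sole point requiring care is the bookkeeping of normalization, including permitting zero-weight components $\omega_B^i = 0$ in an ensemble.
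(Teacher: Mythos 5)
Your proposal is correct and follows exactly the route the paper intends: the paper offers no proof beyond remarking that the proposition is ``essentially a restatement of the definition'' of steering in terms of the conditioning map, and your unwinding---translating the defining condition $\inn{\omega_{AB}, e_A^i\otimes f}=p_i\inn{\omega_i,f}$ into $\homega(e_A^i)(f)=\omega_B^i(f)$, upgrading this to equality in $V_B$ via separation of states by effects, and matching normalized decompositions with unnormalized ensembles---supplies precisely the omitted details. The consistency check that $\sum_i\homega(e_A^i)=\homega(u_A)=\omega_B$ is a nice touch but, as you note, not logically required.
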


While local tomography was a standing assumption in \cite{BGW}, the discussion of
steering so far made no use of it.  Neither do the proofs of the following results
on steering.
This is because even in a non-locally-tomographic context, it is the conditioning map $\hat{\omega}: A^* \rightarrow B$ induced by the state $\omega$ of $AB$ that matters for steering, and this map is determined entirely by (indeed bijectively corresponds to) the \emph{locally tomographic reduced state} 
(aka \emph{locally tomographic shadow}~\cite{BGWLTShadow}) 
of the bipartite state $\omega$. 
This is the element of $(V_A^* \otimes V_B^*)^* (\equiv V_A \otimes V_B)$ determined by $\omega$'s values on elements $e \otimes f$, e.g. product effects.  

\begin{lemma}[Lemma 5.2 in \cite{BGW}.]\label{lemma: steering surjective onto face}
If $\omega \in AB$ steers its marginal $\omega_B$, then 
$\homega(A_+) = \face{(\omega_B)}$.
\end{lemma}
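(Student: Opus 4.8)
The plan is to establish the two inclusions separately. The inclusion $\homega(A_+) \subseteq \face(\omega_B)$ will follow purely from positivity of the conditioning map together with the normalization $\homega(u_A) = \omega_B$, and will use no steering at all; the reverse inclusion $\face(\omega_B) \subseteq \homega(A_+)$ is where the steering hypothesis enters, via Proposition~\ref{prop: steering reformulated}.

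For the forward inclusion, I would use that $u_A$ lies in the interior of the dual cone $A_+$, so it is an order unit: every $a \in A_+$ satisfies $\lambda a \le u_A$ for some $\lambda > 0$. Setting $e := \lambda a$, positivity of $\homega$ gives $\homega(e) \ge 0$ and $\homega(u_A - e) \ge 0$, while $\homega(e) + \homega(u_A - e) = \homega(u_A) = \omega_B$. Thus $\omega_B$ is written as a sum of two elements of $V_{B,+}$, so by the defining property of a face both summands, and in particular $\homega(e)$, lie in $\face(\omega_B)$. Since a face of a cone is itself a sub-cone, hence closed under nonnegative scaling, $\homega(a) = \lambda^{-1} \homega(e) \in \face(\omega_B)$.

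For the reverse inclusion I would use the standard fact that $\omega_B$ lies in the relative interior of the face it generates. Given $y \in \face(\omega_B)$, this guarantees $\omega_B - \epsilon y \in \face(\omega_B) \subseteq V_{B,+}$ for all sufficiently small $\epsilon > 0$. Then $\{\epsilon y, \, \omega_B - \epsilon y\}$ is a genuine two-element ensemble for $\omega_B$, so Proposition~\ref{prop: steering reformulated} yields a two-outcome measurement $\{e_1, e_2\}$ on $A$ with $\homega(e_1) = \epsilon y$. Rescaling, $y = \homega(\epsilon^{-1} e_1)$ with $\epsilon^{-1} e_1 \in A_+$, so $y \in \homega(A_+)$; as $y$ was arbitrary this gives $\face(\omega_B) \subseteq \homega(A_+)$.

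I expect the reverse inclusion to be the main obstacle, and specifically the step of exhibiting an arbitrary face element $y$ as part of a legitimate ensemble for $\omega_B$ so that steering can be invoked. The crux is the observation that $\omega_B$ is relatively interior to $\face(\omega_B)$, which is exactly what makes the perturbation $\omega_B - \epsilon y$ positive; without it the pair $\{\epsilon y, \, \omega_B - \epsilon y\}$ need not be a valid ensemble. Getting the interplay of the scaling $\epsilon$, positivity, and the cone structure of the face correct is the only delicate part, the forward inclusion being routine.
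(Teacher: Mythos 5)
Your proof is correct and follows essentially the same route as the paper, which only sketches the argument by noting that steering forces $\homega$ to map $[0,u_A]$ onto $[0,\omega_B]$ and that these order intervals generate $A_+$ and $\face(\omega_B)$ respectively as cones. Your two inclusions simply make this explicit: the relative-interior perturbation $\omega_B-\epsilon y$ is exactly how one shows $\face(\omega_B)=\R_+[0,\omega_B]$, and the order-unit scaling handles the other side.
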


Indeed, if $\omega$ steers its $B$-marginal $\homega$ must act surjectively from $[0,u_A]$ onto the order-interval $[0, \omega_B]$, since the unnormalized states in $[0,\omega^B]$are precisely the subnormalized states that may appear in ensembles for $\omega^B$
(when these ensembles are written, as we have been doing, 
as lists of subnormalized states). 
In fact, we have: 

\begin{theorem}[Theorem 5 in \cite{BGW}]
\label{theorem: strong quotient map}
$\omega \in AB$ is steering for its $B$ marginal iff $\homega\downharpoonleft_{[0,u_A]} : [0,u_A] \rightarrow [0,\omega_B]$ is a strong quotient map of ordered sets.
\end{theorem}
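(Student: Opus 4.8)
The plan is to read Theorem~\ref{theorem: strong quotient map} as a dictionary between the operational vocabulary of ensembles and measurements and the order-theoretic vocabulary of quotient maps, so that the substance of the proof is to check that ``strong quotient map'' unpacks into exactly the three features that steering supplies: that $\homega$ restricted to the effect interval is (i) monotone, (ii) surjective onto $[0,\omega_B]$, and (iii) able to lift finite decompositions of the top element $\omega_B$ to decompositions of $u_A$. I would first record the two structural facts that are immediate. Since $\homega$ is a positive linear map it is monotone; and since $\homega(u_A)=\omega_B$, monotonicity gives $\homega([0,u_A])\subseteq[0,\omega_B]$, so the codomain asserted in the statement is correct and $\homega\downharpoonleft_{[0,u_A]}$ is a well-defined monotone map of ordered sets sending $0\mapsto 0$ and $u_A\mapsto\omega_B$.

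For the forward direction, assume $\omega$ steers $\omega_B$. Surjectivity onto $[0,\omega_B]$ is the order-interval form of Lemma~\ref{lemma: steering surjective onto face} noted in the discussion before the theorem, and it follows from the two-element ensemble trick: given $y\in[0,\omega_B]$, the pair $\{y,\omega_B-y\}$ is an ensemble for $\omega_B$, so by Proposition~\ref{prop: steering reformulated} there is a measurement $\{e_1,e_2\}$ on $A$ with $\homega(e_1)=y$, and $e_1\in[0,u_A]$ because $e_2=u_A-e_1\ge 0$. The decomposition-lifting property is then \emph{exactly} Proposition~\ref{prop: steering reformulated}: any ensemble $\{\omega_B^i\}$ for $\omega_B$ lifts to a measurement $\{e^i_A\}$ with $\homega(e^i_A)=\omega_B^i$. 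Assembling monotonicity, surjectivity, and decomposition-lifting shows that $\homega\downharpoonleft_{[0,u_A]}$ is a strong quotient map.

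For the converse, assume $\homega\downharpoonleft_{[0,u_A]}$ is a strong quotient map, so in particular it lifts finite decompositions of the top element. Given any ensemble $\{\omega_B^i\}$ for $\omega_B$, i.e.\ $\omega_B^i\ge 0$ with $\sum_i\omega_B^i=\omega_B$, the lifting property produces effects $e^i_A\ge 0$ with $\sum_i e^i_A=u_A$ and $\homega(e^i_A)=\omega_B^i$; this is a measurement on $A$ realizing the ensemble, so Proposition~\ref{prop: steering reformulated} gives that $\omega$ steers $\omega_B$. If the ambient definition of strong quotient map is phrased via lifting of chains rather than of decompositions, I would pass between the two by the standard partial-sum/difference correspondence: a chain $0=s_0\le\cdots\le s_n=\omega_B$ lifts to a chain $0=t_0\le\cdots\le t_n=u_A$ with $\homega(t_k)=s_k$ precisely when the differences $e^i_A=t_i-t_{i-1}$ lift the decomposition $\omega_B=\sum_i(s_i-s_{i-1})$, and this same chain-lifting yields the order-reflecting property that makes the map a quotient (lift the chain $0\le y\le y'\le\omega_B$ to obtain comparable preimages of any $y\le y'$).

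The step I expect to carry the real weight is the bookkeeping of the endpoints, which is also where the word ``strong'' earns its meaning. What steering supplies, and all it supplies, is the lifting of decompositions of the \emph{specific} top element $\omega_B=\homega(u_A)$; because the endpoints $\homega(0)=0$ and $\homega(u_A)=\omega_B$ are matched, such a lift automatically sums to $u_A$ and so is a genuine measurement, which is exactly what Proposition~\ref{prop: steering reformulated} demands. I would stress that this is strictly weaker than a localized lifting---lifting an arbitrary decomposition of some interior value $\homega(x)=y_0$ through a prescribed $x$, i.e.\ mapping principal ideals onto principal ideals---which steering of the single marginal $\omega_B$ does \emph{not} provide and which the theorem does not assert. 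Keeping the lifting anchored at the top (equivalently, at the fixed endpoints $0$ and $u_A$) is therefore the crux: it is what makes the order-theoretic ``strong quotient'' condition coincide exactly with the operational steering condition, rather than with a stronger order-lifting that would fail.
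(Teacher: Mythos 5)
Your proposal is correct and follows essentially the same route as the paper, which likewise derives the equivalence from Proposition~\ref{prop: steering reformulated} via the partial-sum/difference correspondence between measurements (resp.\ ensembles) and chains in $[0,u_A]$ (resp.\ $[0,\omega_B]$), deferring details to \cite{BGW}. The only point to tighten is your claim that a lift of a chain with endpoints $0$ and $\omega_B$ ``automatically sums to $u_A$'': the chain-lifting definition does not force the lifted chain to have endpoints $0$ and $u_A$, but this is repaired by padding the resulting collection with $t_0$ and $u_A-t_n$, both of which $\homega$ sends to $0$ and which therefore contribute only null outcomes to the measurement.
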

$\homega\downharpoonleft_{[0,u_A]}$ is just the restriction of $\homega$ to $[0,u_A]$, corestricted to $[0, \omega_B]$.  (If we have unrestricted effects, these will be its domain and range.)  
A strong quotient map of ordered sets $\phi: X \rightarrow Y$ is one such that each finite chain 
in $Y$ is the image of some chain in $X$.  The equivalence follows from 
Proposition \ref{prop: steering reformulated} by considering the elements
of the measurement $M = e_1,...,e_n$ as the differences of elements in a chain 
$c_1 \le c_2 \le \cdots c_k \cdots \le c_n$, with $c_k :=\sum_{i=1}^k e_k$, and 
similarly for the elements of the ensemble $\omega_B^i$; see \cite{BGW} for the details.

{}

\begin{corollary}[Corollary 5.5 in \cite{BGW}] 
\label{cor: injective interior steering}
Let $\omega$ be steering for $\omega_B \in \interior B_+$, so that $\face{(\omega_B)} = B_+$.
If $\homega$ is injective, then it is an order isomorphism.  
\end{corollary}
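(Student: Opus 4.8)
The plan is to read the conclusion off almost directly from Lemma~\ref{lemma: steering surjective onto face}, using the interiority of $\omega_B$ to upgrade the face to the whole cone, and then to supply bijectivity from the injectivity hypothesis. First I would invoke Lemma~\ref{lemma: steering surjective onto face}: since $\omega$ steers its marginal $\omega_B$, we have $\homega(A_+) = \face(\omega_B)$. Because $\omega_B \in \interior B_+$, the face it generates is all of $B_+$, so $\homega(A_+) = B_+$. Thus $\homega$, viewed as a positive linear map from the ordered space $V_A^*$ (with its dual cone $A_+$) to $V_B$, already carries the positive cone of its domain \emph{onto} the full positive cone of its codomain.

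Next I would promote this cone-level surjectivity to surjectivity as a linear map. Since the positive cone $B_+$ spans $V_B$ (the cone of unnormalized states generates the ambient space), and $B_+ = \homega(A_+) \subseteq \im(\homega)$, the image of $\homega$ contains a spanning set and hence $\homega$ is onto $V_B$. Combined with the injectivity hypothesis, $\homega$ is a linear bijection (in particular the hypotheses force $\dim V_A^* = \dim V_B$). Finally I would check that a linear bijection with $\homega(A_+) = B_+$ is an order isomorphism: applying the bijection $\homega^{-1}$ to the equality $\homega(A_+) = B_+$ gives $\homega^{-1}(B_+) = A_+$, so $\homega^{-1}$ is positive as well; thus both $\homega$ and its inverse are positive, which is exactly the definition of an order isomorphism between $V_A^*$ and $V_B$.

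There is no real obstacle here: the corollary is essentially the conjunction of the already-established surjectivity in Lemma~\ref{lemma: steering surjective onto face} with the extra injectivity hypothesis, and it does not even require the strong-quotient-map machinery of Theorem~\ref{theorem: strong quotient map}. The only points that merit a word of care are the equivalence that a point of a cone lies in the interior precisely when the face it generates is the whole cone (which is what lets us write $\face(\omega_B) = B_+$), and the fact that a generating cone spans its ambient space; both are standard for finite-dimensional ordered vector spaces with generating cones.
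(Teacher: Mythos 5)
Your argument is correct and is essentially the route the paper (following \cite{BGW}) intends: Lemma~\ref{lemma: steering surjective onto face} gives $\homega(A_+)=\face(\omega_B)=B_+$ for interior $\omega_B$, the generating cone $B_+$ spans $V_B$ so $\homega$ is surjective, and injectivity then yields a linear bijection carrying $A_+$ onto $B_+$, which is precisely the paper's definition of an order isomorphism. Your observation that the strong-quotient-map theorem is not needed for this corollary is also accurate.
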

In this situation, $A$ must have unrestricted effects, for $[0, u_A]$ must be
isomorphic to the full o{}rder interval $[0,\omega^B]$.

Given this, we can now state a corrected and generalised version of Proposition 5.6 from \cite{BGW}:
\begin{theorem*}[Restatement of Theorem \ref{prop: universal steering implies homogeneity}]
In any theory that supports universal uniform steering  where the steering system $A$ is of the same dimension as the steered system  $B$, every
 irreducible finite-dimensional state space in the theory is homogeneous.
\end{theorem*}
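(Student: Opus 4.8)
The plan is to prove homogeneity directly, by exhibiting, for any two interior states of $B$, an order automorphism of $V_B$ carrying one to the other. Fix an irreducible finite-dimensional system $B$ together with two interior states $\omega_B,\sigma_B\in\interior B_+$. Universal uniform steering supplies a single system $A$ that steers $B$, and by hypothesis $A$ may be taken with $\dim V_A=\dim V_B$. Applying the definition of steering to each of the two interior marginals yields bipartite states $\omega,\sigma$ of $AB$ steering $\omega_B,\sigma_B$, with conditioning maps $\hat\omega,\hat\sigma\colon V_A^*\to V_B$ satisfying $\hat\omega(u_A)=\omega_B$ and $\hat\sigma(u_A)=\sigma_B$. (The same $A$ serves both marginals, since $A$ steers every state of $B$ by definition.) I would then promote each conditioning map to an order isomorphism and compose the two.

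The heart of the argument is to show that the conditioning map of a state steering an \emph{interior} marginal is an order isomorphism. Because $\omega_B$ is interior, $\face(\omega_B)=B_+$, so Lemma~\ref{lemma: steering surjective onto face} gives that $\hat\omega$ maps the effect cone $(V_A^*)_+$ onto $B_+$. As $B_+$ is generating, the image of the linear map $\hat\omega$ is all of $V_B$, i.e.\ $\hat\omega$ is surjective. This is exactly where the same-dimension hypothesis is used: since $\dim V_A^*=\dim V_A=\dim V_B$, a surjective linear map between spaces of equal finite dimension is bijective, hence injective. Injectivity then lets me invoke Corollary~\ref{cor: injective interior steering} (whose hypothesis $\face(\omega_B)=B_+$ holds precisely because $\omega_B$ is interior) to conclude that $\hat\omega$ is an order isomorphism $V_A^*\to V_B$; the identical reasoning applies to $\hat\sigma$.

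With both conditioning maps order isomorphisms sharing the intermediate space $V_A^*$, I set $\Phi:=\hat\sigma\circ\hat\omega^{-1}\colon V_B\to V_B$. This is a composition of order isomorphisms and hence an order automorphism of $V_B$, and $\Phi(\omega_B)=\hat\sigma(\hat\omega^{-1}(\omega_B))=\hat\sigma(u_A)=\sigma_B$. Since $\omega_B$ and $\sigma_B$ were arbitrary interior states, the automorphism group of $V_B$ acts transitively on $\interior B_+$, which is precisely homogeneity.

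I expect the decisive step---and the reason the authors flag a hypothesis omitted in \cite{BGW}---to be the injectivity of $\hat\omega$. Surjectivity onto $B_+$ is automatic once the steered marginal is interior, but without the equal-dimension assumption the steering system could be ``too large,'' leaving $\hat\omega$ non-injective so that Corollary~\ref{cor: injective interior steering} does not apply; finite-dimensionality is equally essential, since it is what turns surjectivity together with equality of dimensions into bijectivity. The remaining points---that a common $A$ handles both marginals, that the relevant cone in the domain of $\hat\omega$ is the effect cone $(V_A^*)_+$, and that the composite of the two isomorphisms is a genuine self-map of $V_B$---are routine.
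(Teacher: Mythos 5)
Your proposal is correct and follows essentially the same route as the paper: surjectivity of the conditioning map onto $\face(\omega_B)=B_+$ via Lemma~\ref{lemma: steering surjective onto face}, injectivity from the equal-dimension hypothesis, Corollary~\ref{cor: injective interior steering} to upgrade to an order isomorphism, and composition $\hat\sigma\circ\hat\omega^{-1}$ to carry one interior state to another. The paper leaves that final composition step implicit; you make it explicit, which is the only (harmless) difference.
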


The correction (needed for the argument in \cite{BGW}, at least)   
is the added dimensionality requirement, since a linear surjection of a vector space $V$ onto $W$ of the same dimension must be a linear bijection.  So $\homega$ is injective, implying, by Corollary \ref{cor: injective interior steering}, that it is an order isomorphism.

\subsection{Proof of derivation of quantum theory from local tomography (Theorem
\ref{theorem: local tomography 1})}

As discussed in the sketch of the proof, we may assume that our system is a simple EJA.  We furthermore have locally tomographic composites---a standing assumption in this section 
until it is explicitly suspended.  Lemmas \ref{lem:pure-effects} and \ref{lem:pure-states} below apply to arbitrary locally tomographic composites with 
unrestricted states and effects.  Lemma \ref{lem: pure effects ray-extremal}
(at least as proved here) applies to the full  
interval
of effects $E = [0,u]$ in a cone generated by effects, 
and Lemma \ref{lem:pure-effects} applies to a situation where $E_A$ and $E_B$ are
such full order intervals.  But the use of Lemma 
\ref{lem:pure-effects} as 
a mathematical tool to obtain Theorem \ref{theorem: local tomography 1} does not
impose unrestricted effects on the GPT systems in that theorem.  

\begin{lemma}[Lemma 3 of~\cite{BBLW06Cloning}]\label{lem:pure-marginal-state}
    If either marginal of a bipartite state $\omega_{AB}$ is pure then the state is a product state $\omega_A\otimes \omega_B$.
\end{lemma}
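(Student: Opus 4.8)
The plan is to exploit the ensemble structure induced by conditioning, together with extremality of the pure marginal. Without loss of generality I suppose the $A$-marginal $\omega_A$ is pure (and normalized, so $\inn{\omega_A, u_A} = 1$). For each effect $f \in V_B^*$ I define the conditional unnormalized state $\omega_A^f \in V_A$ by $\inn{\omega_A^f, e} := \inn{\omega_{AB}, e \otimes f}$ for all $e \in V_A^*$; this is just the conditioning map of $\omega_{AB}$ read in the direction $V_B^* \to V_A$, and it is positive and linear in $f$. The goal is to show that every such $\omega_A^f$ is forced to be a scalar multiple of $\omega_A$, which then pins down $\omega_{AB}$ on all product effects.

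First I would observe that for any effect $f$ with $0 \le f \le u_B$, the pair $\{f,\, u_B - f\}$ is a measurement on $B$, and the associated conditional states satisfy $\omega_A^{f} + \omega_A^{u_B - f} = \omega_A^{u_B} = \omega_A$, since $\inn{\omega_A^{u_B}, e} = \inn{\omega_{AB}, e \otimes u_B} = \inn{\omega_A, e}$ by the definition of the marginal. Thus $\omega_A$ is written as a sum of two subnormalized states. Rescaling each summand to a normalized state exhibits $\omega_A$ as a genuine convex combination, so extremality of $\omega_A$ forces every summand of positive weight to equal $\omega_A$; hence $\omega_A^f = p(f)\,\omega_A$ for some $p(f) \ge 0$. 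Taking $e = u_A$ identifies the constant: $p(f) = \inn{\omega_A^f, u_A} = \inn{\omega_{AB}, u_A \otimes f} = \inn{\omega_B, f}$, the value of the $B$-marginal on $f$. Therefore $\omega_A^f = \inn{\omega_B, f}\,\omega_A$ for every effect $f$ on $B$.

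Evaluating on an arbitrary product effect then gives $\inn{\omega_{AB}, e \otimes f} = \inn{\omega_A^f, e} = \inn{\omega_A, e}\inn{\omega_B, f} = \inn{\omega_A \otimes \omega_B,\, e \otimes f}$, so $\omega_{AB}$ and $\omega_A \otimes \omega_B$ agree on all product effects, i.e.\ they have the same locally tomographic shadow. Under the standing local-tomography assumption the product effects separate the states of $\Omega_{AB}$, which upgrades this agreement to the equality $\omega_{AB} = \omega_A \otimes \omega_B$ in $V_{AB}$.

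I expect the main subtlety to be the extremality step: one must allow the conditional states to be \emph{unnormalized} (indeed subnormalized), pass to normalized states in order to invoke extremality of $\omega_A$ honestly, and then correctly recover the proportionality constant $p(f)$; and one must extend the proportionality from effects that appear in a two-outcome measurement to \emph{all} effects, which is precisely why completing $f$ to the measurement $\{f,\, u_B - f\}$ is needed. The final passage from product effects to the full state is where local tomography (separation of states by product effects) is essential; without it the same argument determines only the locally tomographic shadow of $\omega_{AB}$.
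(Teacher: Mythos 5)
Your proof is correct, but there is nothing in the paper to compare it against: the paper does not prove this lemma itself. It imports it verbatim as Lemma~3 of \cite{BBLW06Cloning} and adds only a scope remark, namely that the lemma (originally stated for the maximal tensor product) continues to hold for any composite whose state cone is contained in the maximal tensor product, which is guaranteed in the paper's setting by local tomography and unrestricted effects. Your argument---conditioning on the two-outcome measurement $\{f,\,u_B-f\}$, using extremality of the pure marginal to force $\omega_A^f=\inn{\omega_B,f}\,\omega_A$, and then separating states by product effects---is the standard proof of the cited lemma, so in substance you have reconstructed the proof the paper delegates to its reference. The one point worth making explicit, and it is precisely what the paper's scope remark is about, is your assertion that each $\omega_A^f$ is a subnormalized \emph{state}: nonnegativity of $\inn{\omega_{AB}, e\otimes f}$ for all effects $e$ only places $\omega_A^f$ in the dual of the effect cone of $A$, and identifying that dual with $V_{A+}$ (so that the rescaled summands genuinely lie in $\Omega_A$ and extremality of $\omega_A$ can be invoked) uses unrestricted effects; this is equivalent to the state cone of the composite sitting inside the maximal tensor product. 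You also correctly isolate local tomography as the ingredient that upgrades agreement on product effects from a statement about the locally tomographic shadow to the equality $\omega_{AB}=\omega_A\otimes\omega_B$; without it the lemma's conclusion would indeed only hold at the level of the shadow.
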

This lemma as stated in \cite{BBLW06Cloning} applied to bipartite states in the maximal tensor product, but continues to hold for any composite system whose state cone is contained in the maximal tensor product. That is the case in our setting since we are assuming local tomography and unrestricted effects.

Using this lemma we can prove that a composite of pure states is again pure.
\begin{lemma}\label{lem:pure-states}
    Let $\omega_A$ and $\omega_B$ be pure states. Then $\omega_A\otimes \omega_B$ is also pure.
\end{lemma}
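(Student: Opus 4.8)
The plan is to prove directly that $\omega_A \otimes \omega_B$ is extremal in $\Omega_{AB}$, using the marginal lemma (Lemma~\ref{lem:pure-marginal-state}) to collapse any nontrivial convex decomposition. First I would suppose a decomposition
\begin{equation}
\omega_A \otimes \omega_B = \sum_i p_i \, \sigma_i,
\end{equation}
with $p_i > 0$, $\sum_i p_i = 1$, and each $\sigma_i$ a normalized state of $\Omega_{AB}$, and aim to show every $\sigma_i$ equals $\omega_A \otimes \omega_B$, so that the decomposition is trivial.

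The first key step is to marginalize onto $A$. Since the marginalization map $\omega_{AB} \mapsto (\omega_{AB})_A$ is affine (it is dual to the effect map $e_A \mapsto e_A \otimes u_B$) and sends $\omega_A \otimes \omega_B$ to $\omega_A$, I obtain $\omega_A = \sum_i p_i (\sigma_i)_A$. Because $\omega_A$ is a pure, hence extremal, normalized state of $\Omega_A$, this convex combination must be trivial, forcing $(\sigma_i)_A = \omega_A$ for every $i$. Each $\sigma_i$ therefore has a pure $A$-marginal, so Lemma~\ref{lem:pure-marginal-state} applies and gives $\sigma_i = \omega_A \otimes (\sigma_i)_B$.

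Substituting back and using bilinearity of $\otimes$ yields
\begin{equation}
\omega_A \otimes \omega_B = \omega_A \otimes \Big( \sum_i p_i (\sigma_i)_B \Big).
\end{equation}
Marginalizing this identity onto $B$ (again an affine map, sending $\omega_A \otimes \tau$ to $\tau$ since the marginals of a product state are its factors) gives $\omega_B = \sum_i p_i (\sigma_i)_B$. Purity of $\omega_B$ then forces $(\sigma_i)_B = \omega_B$ for all $i$, whence $\sigma_i = \omega_A \otimes \omega_B$ for every $i$; the decomposition was trivial, and $\omega_A \otimes \omega_B$ is pure.

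The only point requiring real care is the passage from the displayed identity to $\omega_B = \sum_i p_i (\sigma_i)_B$. I phrased it above via the $B$-marginal, which avoids any cancellation argument and does not even require local tomography. Alternatively, one may cancel the common factor $\omega_A \ne 0$ using that the GPT tensor lifts to the linear injection $V_A \otimes V_B \hookrightarrow V_{AB}$, which under local tomography ($\dim V_{AB} = \dim V_A \dim V_B$) is an isomorphism, so that $\omega_A \otimes \omega_B$ genuinely is the algebraic tensor and cancellation by a nonzero factor is elementary. I expect this cancellation/marginalization step to be the only mild obstacle, and the marginal route sidesteps it cleanly; the substantive work is entirely carried by Lemma~\ref{lem:pure-marginal-state} together with extremality of the two pure factors.
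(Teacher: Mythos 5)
Your proof is correct and takes essentially the same route as the paper's: marginalize onto $A$, use extremality of the pure state $\omega_A$ to force every $A$-marginal in the decomposition to equal $\omega_A$, invoke Lemma~\ref{lem:pure-marginal-state} to conclude each component is a product state, and then pin down the $B$-factor via the $B$-marginal. The only cosmetic differences are that you work with a general finite convex combination rather than a binary one and spell out the final $B$-marginal step (which the paper compresses to ``a similar calculation''), together with your optional remark on cancellation, which the marginal route indeed makes unnecessary.
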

\begin{proof}
    Suppose $\omega_{AB} := \omega_A\otimes \omega_B$ with $\omega_A$ and $\omega_B$ pure can be decomposed as $\omega_{AB} = \lambda \sigma_{AB} + (1-\lambda) \sigma_{AB}'$ for some $0<\lambda<1$ and bipartite states $\sigma_{AB}$ and $\sigma_{AB}'$. We need to show that $\sigma_{AB} = \sigma_{AB}' = \omega_A\otimes \omega_B$.

    The $A$ marginal of $\omega_{AB}$ is $\omega_A$ which is pure. By assumption it is also equal to $\lambda \sigma_A + (1-\lambda) \sigma_A'$. Hence, $\sigma_A = \sigma_A' = \omega_A$ are pure. By the previous lemma, $\sigma_{AB}$ and $\sigma_{AB}'$ are then product states. We can do a similar calculation with the $B$ marginal, so that we end up with $\sigma_{AB} = \sigma_{AB}' = \omega_A\otimes \omega_B$.
\end{proof}

\begin{lemma}[Corollary of Lemma \ref{lem:pure-states}]
\label{lem: tensoring ray-extremal}
Let $x$ and $y$ be ray-extremal elements of the positive cones $V^A_+$ and
$V^B_+$ respectively, and let $V^A_+ \omin V^B_+ \subseteq V^{AB}_+ \subseteq V^A_+ \omax V^B_+$.  Then $x \otimes y$ is ray-extremal in $V^{AB}_+$.  
\end{lemma}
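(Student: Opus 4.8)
The plan is to deduce this from Lemma~\ref{lem:pure-states} by translating between the cone-theoretic notion of ray-extremality and the convex-geometric notion of a pure state of a base, using the dictionary set up in the Preliminaries. Recall from there that the non-null extremal rays of a cone $V_+$ are exactly the rays $\R_+\omega$ through the pure states $\omega$ of any fixed base $\Omega$; equivalently, a nonzero $z\in V_+$ is ray-extremal precisely when its normalization $\reallywidetilde{z}:=z/u(z)$ is an extremal (pure) point of $\Omega$. So the whole content of the lemma is that the product of two pure states is pure, which is exactly Lemma~\ref{lem:pure-states}, once I check that its hypotheses are available under the present weaker assumptions.

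Concretely, first I would dispose of the degenerate case: if $x=0$ or $y=0$ then $x\otimes y=0$ and there is nothing to prove, so assume $x,y\neq 0$. Because $u_A$ and $u_B$ are strictly positive functionals, $u_A(x)>0$ and $u_B(y)>0$, and the normalizations $\reallywidetilde{x}:=x/u_A(x)$ and $\reallywidetilde{y}:=y/u_B(y)$ are pure states of $\Omega_A$ and $\Omega_B$ by ray-extremality of $x$ and $y$. The inclusion $V^A_+\omin V^B_+\subseteq V^{AB}_+$ guarantees that the product state $\reallywidetilde{x}\otimes\reallywidetilde{y}$ genuinely lies in the composite cone $V^{AB}_+$ (and is normalized, since $u_{AB}=u_A\otimes u_B$), so it is a legitimate point of $\Omega_{AB}$ to which Lemma~\ref{lem:pure-states} applies. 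That lemma then yields that $\reallywidetilde{x}\otimes\reallywidetilde{y}$ is pure in $\Omega_{AB}$, i.e. its ray $\R_+(\reallywidetilde{x}\otimes\reallywidetilde{y})$ is extremal in $V^{AB}_+$. Since $x\otimes y=u_A(x)\,u_B(y)\,(\reallywidetilde{x}\otimes\reallywidetilde{y})$ is a strictly positive multiple of a ray-extremal element, it generates the same extremal ray and is therefore ray-extremal in $V^{AB}_+$, as claimed.

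The one point needing care—and the only real obstacle—is that Lemma~\ref{lem:pure-states} was discussed in the presence of the standing local-tomography and unrestricted-effects assumptions, whereas here the only hypothesis on the composite is the sandwiching $V^A_+\omin V^B_+\subseteq V^{AB}_+\subseteq V^A_+\omax V^B_+$. I would verify that the proof of Lemma~\ref{lem:pure-states} uses nothing beyond this: it invokes only the linearity of the marginal map together with Lemma~\ref{lem:pure-marginal-state}, and the latter, as remarked after its statement, continues to hold for any composite whose state cone is contained in the maximal tensor product $V^A_+\omax V^B_+$—precisely the upper inclusion assumed here. Thus both inclusions play a role, the lower one to place product states inside $V^{AB}_+$ and the upper one to license the pure-marginal-implies-product argument, and no further tomographic or effect-space hypotheses are needed for this corollary.
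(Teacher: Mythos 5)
Your proof is correct and follows essentially the same route as the paper's: dispose of the zero case, write $x=\lambda_x\omega_x$ and $y=\lambda_y\omega_y$ with $\omega_x,\omega_y$ pure in the respective bases, invoke Lemma~\ref{lem:pure-states} to get purity of $\omega_x\otimes\omega_y$ in $\Omega^{AB}$, and conclude since $x\otimes y$ is a strictly positive multiple of that pure state. Your added check that Lemma~\ref{lem:pure-states} only needs the upper inclusion $V^{AB}_+\subseteq V^A_+\omax V^B_+$ (via the remark following Lemma~\ref{lem:pure-marginal-state}) is a welcome point of care that the paper leaves implicit.
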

\begin{proof}
The conclusion is trivial if either of $x,y$ are zero.  Suppose neither is.  
Let $\Omega^A, \Omega^B$ be any bases for $V^A_+, V^B_+$, determined by interior
elements $u_A$ and $u_B$ in $(V^A_+)^*, (V^B_+)^*$ respectively, and let 
$\Omega^{AB}$ be the base $\{\omega \in V^{AB}_+: (u_A \otimes u_B)(\omega) = 1\}$ 
Since they are (nonzero) ray-extremal, $x = \lambda_x \omega_x, y = \lambda_y 
\omega_y$
for unique strictly positive $\lambda_x, \lambda_y$ and unique pure $\omega_x \in \Omega^A, \omega_y \in \Omega^B$.  By Lemma \ref{lem:pure-states}, 
$\omega_x \otimes \omega_y$ is pure in $\Omega^{AB}$.  But 
$x \otimes y = \lambda_x \lambda_y (\omega_x \otimes \omega_y)$, a strictly 
positive multiple of a pure state of $\Omega^{AB}$.   Consequently it is 
ray-extremal in $V^{AB}_+$.  
\end{proof}

Recall that a pure effect was defined as one that cannot be expressed as a sum
of nonzero effects not proportional to each other.
\begin{lemma}\label{lem: pure effects ray-extremal}
Let the set of effects be the order-interval $[0, u]$ in the cone generated
by the effects. Then purity of an effect is equivalent to its ray extremality.
\end{lemma}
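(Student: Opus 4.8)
The plan is to unwind both notions into the same elementary statement about decompositions of $e$ into nonnegative summands, the bridge being a closure property special to order intervals. First I would record the key observation: if $e \in [0,u]$ and $e = a + b$ with $a,b$ in the effect cone, then $a \le a+b = e \le u$, so $a \in [0,u]$, and symmetrically $b \in [0,u]$. Hence every expression of an effect as a sum of nonnegative elements of the effect cone is automatically a decomposition into effects. This is precisely where the hypothesis $E = [0,u]$ enters, and it is the only load-bearing ingredient: for a general effect set a nonnegative summand of an effect need not itself be an effect, and the equivalence can fail.

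With this in hand I would prove \textbf{ray-extremal $\Rightarrow$ pure} directly. If $\R_+ e$ is a face of the effect cone and $e = \sum_i e_i$ is a coarse-graining into nonzero effects, then the face condition applied to $e$ forces each $e_i \in \R_+ e$, so $e_i = \lambda_i e$ with $\lambda_i > 0$; thus any two summands are proportional, so $e$ admits no nontrivial coarse-graining and is pure. Conversely, for \textbf{pure $\Rightarrow$ ray-extremal}, I would take any decomposition $e = \sum_i f_i$ with $f_i$ in the effect cone. By the closure observation the nonzero $f_i$ are effects, so purity makes them mutually proportional; fixing one nonzero summand $g$, each $f_i = c_i g$, whence $e = (\sum_i c_i)\, g$ and therefore every $f_i$ is a nonnegative multiple of $e$. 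This is exactly the statement that $\R_+ e$ is a face, so $e$ is ray-extremal.

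I expect no serious obstacle, since the argument is essentially bookkeeping once the order-interval observation is isolated. The only points needing care are the degenerate cases---restricting to $e \ne 0$ (the null ray $\{0\}$ is not an extremal ray, whereas the zero effect is vacuously non-coarse-grainable), and discarding summands equal to $0$---together with the remark that two proportional nonzero elements of a pointed cone have a strictly positive ratio. I would dispose of these in a line before running the two implications above.
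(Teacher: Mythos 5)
Your proof is correct and follows essentially the same route as the paper's: the single load-bearing fact in both is that a nonnegative cone-summand $a$ of an effect $e$ satisfies $0 \le a \le e \le u$ and is therefore itself an effect, after which both directions are bookkeeping with the definitions of face and of purity. The only cosmetic difference is that you argue the second implication directly (pure $\Rightarrow$ ray-extremal) where the paper argues the contrapositive.
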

\begin{proof}
Suppose an effect to be ray-extremal in the cone generated by the 
set of effects.  It is immediate from the definition of face and 
of extremal ray as a one-dimensional face (see the above Preliminaries) that it cannot be expressed as a sum of effects unless they lie in that ray.  On the 
other hand, suppose an effect $f$ is not ray extremal.  Then it can be expressed
as a sum $f = a + b$ of nonzero elements in the cone at least one of which is not proportional to that effect.  But $a,b$ are each effects because  
$a,b > 0$ so $0 < b= f - a \le u - a \le u$ (and similarly with $b$ and $a$ interchanged).  Hence $f$ is not pure. 
\end{proof}

The following is then a case of Lemma \ref{lem: tensoring ray-extremal}. 
\begin{lemma}\label{lem:pure-effects}
    Let $e_A$ and $e_B$ be pure effects. Then $e_A\otimes e_B$ is pure.
\end{lemma}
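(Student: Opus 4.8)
The plan is to reduce purity of effects to ray-extremality in the effect cones and then apply Lemma \ref{lem: tensoring ray-extremal} verbatim, with the effect cones playing the role of the state cones. First I would invoke Lemma \ref{lem: pure effects ray-extremal}, which applies precisely because we are working in the unrestricted-effects setting (the effect set being the full order interval $[0,u]$ in the cone generated by the effects, as flagged in the paragraph preceding Lemma \ref{lem:pure-marginal-state}). This tells us that purity of $e_A$ and $e_B$ is equivalent to their being ray-extremal in the effect cones $(V^A_+)^*$ and $(V^B_+)^*$ respectively, so the hypotheses of the tensoring lemma are met on the factor side.

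Next I would verify that the composite's effect cone is sandwiched between the minimal and maximal tensor products of the factor effect cones, i.e. $(V^A_+)^* \omin (V^B_+)^* \subseteq (V^{AB}_+)^* \subseteq (V^A_+)^* \omax (V^B_+)^*$. This is the dual of the analogous sandwich $V^A_+ \omin V^B_+ \subseteq V^{AB}_+ \subseteq V^A_+ \omax V^B_+$, which holds for any GPT composite: the bilinear product maps induce the inclusion of the minimal tensor product, and any admissible composite state cone sits inside the maximal one. Dualizing reverses the inclusions and interchanges the two tensor products via the standard identities $(C \omin D)^* = C^* \omax D^*$ and $(C \omax D)^* = C^* \omin D^*$, which delivers exactly the required sandwich for the effect cones.

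With both ingredients in place, Lemma \ref{lem: tensoring ray-extremal}—applied with $(V^A_+)^*, (V^B_+)^*, (V^{AB}_+)^*$ in place of $V^A_+, V^B_+, V^{AB}_+$—immediately yields that $e_A \otimes e_B$ is ray-extremal in $(V^{AB}_+)^*$. Finally I would apply Lemma \ref{lem: pure effects ray-extremal} once more, now to the composite system, to convert this ray-extremality back into purity, so that $e_A \otimes e_B$ is a pure effect, as claimed.

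The main obstacle is the bookkeeping of the second step: ensuring the min--max duality is applied with the correct orientation of the inclusions, and confirming that the hypotheses of Lemma \ref{lem: pure effects ray-extremal} (effects forming the full order interval $[0,u]$ in the cone they generate) genuinely hold for the factors \emph{and} the composite in the unrestricted-effects versions of the systems used here. Everything else is a direct specialization of Lemma \ref{lem: tensoring ray-extremal}, which is why the statement is \emph{a case} of that lemma rather than an independent argument.
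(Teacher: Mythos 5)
Your proposal is correct and matches the paper's intent exactly: the paper's entire proof is the remark that the lemma ``is then a case of Lemma \ref{lem: tensoring ray-extremal},'' applied to the dual cones after converting purity to ray-extremality via Lemma \ref{lem: pure effects ray-extremal}. Your elaboration of the min--max duality sandwich for the effect cones and the check that the unrestricted-effects hypothesis holds on both factors and the composite is precisely the bookkeeping the paper leaves implicit.
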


\begin{theorem*}[Restatement of Theorem \ref{theorem: local tomography 1}]
    Let $A$ be a system in a GPT where composites are locally tomographic and where every system satisfies homogeneity and continuous pure transitivity. 
    Then $V_A$ is a simple complex matrix algebra.
\end{theorem*}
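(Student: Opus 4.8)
The plan is to reduce the theorem to a numerical statement about two invariants of simple Euclidean Jordan algebras, the dimension $d$ and the rank $r$, and to show that among all simple EJAs only the complex matrix algebras are stable under forming the composite with themselves. First, by Corollary~\ref{cor:homogeneity-simple}, homogeneity and continuous pure transitivity force $V_A$ to be a simple EJA; write $d:=\dim V_A$ and let $r$ be its rank, the cardinality of a Jordan frame (equivalently, the maximal number of mutually distinguishable pure states). The composite $AA$ is again a system of the theory, so $V_{AA}$ is again a simple EJA, and local tomography gives immediately $\dim V_{AA}=d^2$. (The case $r=1$ is the trivial one-dimensional system $\R=M_1(\C)_{\mathrm{sa}}$, so we assume $r\ge 2$.)

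The crucial step is to show $\mathrm{rank}(V_{AA})=r^2$. Take a Jordan frame $\omega_1,\dots,\omega_r$ of $V_A$ together with the corresponding distinguishing pure effects $e_1,\dots,e_r$ satisfying $e_i(\omega_j)=\delta_{ij}$ and $\sum_i e_i=u_A$; these effects are available once we pass, as in the proof sketch, to the unrestricted-effect versions of the systems, which does not affect the intrinsic invariant $\mathrm{rank}(V_{AA})$. By Lemma~\ref{lem:pure-states} each $\omega_i\otimes\omega_j$ is a pure state, by Lemma~\ref{lem:pure-effects} each $e_i\otimes e_j$ is a pure effect, and since $(e_i\otimes e_j)(\omega_k\otimes\omega_l)=\delta_{ik}\delta_{jl}$ while $\sum_{i,j}e_i\otimes e_j=u_A\otimes u_A=u_{AA}$, the $r^2$ pure states $\omega_i\otimes\omega_j$ are mutually distinguishable and their distinguishing effects resolve the unit. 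This exhibits a Jordan frame of $V_{AA}$ of cardinality $r^2$; since every Jordan frame of an EJA has cardinality equal to the rank, $\mathrm{rank}(V_{AA})=r^2$.

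It then remains to run the dimension count across the classification. A rank-$k$ matrix EJA over $\R$, $\C$, $\H$ has dimension $k+\delta\binom{k}{2}$ with $\delta\in\{1,2,4\}$ respectively. Since $\mathrm{rank}(V_{AA})=r^2\ge 4$, the composite is such a matrix EJA for some $\delta'\in\{1,2,4\}$ (it cannot be a rank-$2$ spin factor nor the rank-$3$ Albert algebra). For $r\ge 3$, substituting $d=r+\delta\binom{r}{2}$ into $d^2=r^2+\delta'\binom{r^2}{2}$ and dividing by $r-1$ yields the single constraint $2\delta'(r+1)=\delta^2 r+\delta(4-\delta)$; checking $\delta\in\{1,2,4\}$ shows that $\delta'\in\{1,2,4\}$ is possible only for $\delta=2$, which gives $\delta'=2$ and hence $V_A$ complex. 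The exceptional algebra $(d,r)=(27,3)$ is excluded because $729=9+36\delta'$ forces $\delta'=20$; and for $r=2$, where $V_A$ is a spin factor of arbitrary dimension, the equation $d^2=4+6\delta'$ is solvable in integers only for $\delta'=2$, forcing $d=4$, the $4$-dimensional spin factor $M_2(\C)_{\mathrm{sa}}$, again complex. Thus $V_A$ is a simple complex matrix algebra.

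The step I expect to be the main obstacle is the rank computation: one must know that both purity and perfect distinguishability are preserved under $\otimes$, so that the product family is genuinely a \emph{complete} orthogonal frame of the composite EJA and not merely a distinguishable set---this is exactly what Lemmas~\ref{lem:pure-states} and~\ref{lem:pure-effects}, together with the reduction to unrestricted effects, are designed to provide. Once both invariants $(\dim,\mathrm{rank})=(d^2,r^2)$ are fixed, the concluding case analysis is routine arithmetic.
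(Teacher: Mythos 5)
Your proposal is correct and follows essentially the same route as the paper's proof: reduce to a simple EJA via Corollary~\ref{cor:homogeneity-simple}, use Lemmas~\ref{lem:pure-states} and~\ref{lem:pure-effects} (after passing to unrestricted effects) to show the composite has rank $r^2$ and, by local tomography, dimension $d^2$, and then eliminate all non-complex cases by dimension counting. The only difference is presentational: you package the final case analysis into the single identity $2\delta'(r+1)=\delta^2 r+\delta(4-\delta)$, which is a clean and correct way to organize the arithmetic the paper carries out case by case.
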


\begin{proof}[Proof of Theorem~\ref{theorem: local tomography 1}]
By Theorem \ref{theorem: main}, $V^A_+$ is the positive cone of a simple Jordan
algebra with unit $u$ and rank $r$.    
Consider the system with the same state space and unrestricted
effects $E = \{e: 0 \le e \le u\}$ where the ordering is that determined
by $(V^A_+)^*$.  
Let $\omega_1,\ldots, \omega_r$ be a frame of pure states (i.e.~a maximal collection of orthogonal pure states) of the system $A$ and let $e_1,\ldots, e_r$ be its corresponding set of effects pure in $E$, which satisfy $\inn{\omega_i,e_j} = \delta_{ij}$.
Note that $\inn{\omega_i\otimes \omega_j,e_k\otimes e_l} = \inn{\omega_i,e_k}\inn{\omega_j,e_l} = \delta_{ik}\delta_{jl}$ so that the $\omega_i\otimes \omega_j$ are still perfectly distinguishable. As $\sum_{i,j} e_i\otimes e_j = u\otimes u = u_{AA}$, and each of the $e_i\otimes e_j$ is pure, this is also a maximal collection of such states, so that we know that the rank of $AA$ is $r^2$.  The ranks and dimensions of all simple EJAs
are known---see e.g. \cite{FarautKoranyi}, p. 97.  
We can now make a case distinction on the types of simple Jordan algebras to see that $V_A$ can only be complex. 
For instance, if $V_A$ is the rank-3 Albert algebra of $3\times 3$ self-adjoint octonion matrices, which is 27-dimensional, then the composite $AA$ would need to be a rank 9 simple algebra of dimension $27^2=729$ (due to local tomography). But such a simple algebra doesn't exist. Quaternionic systems of rank at least 3 and spin-factors of dimension $>4$ can be ruled out in a similar way. 
This leaves real and complex systems as the only option. However, for rank-$r$ real systems $A$, whose dimension is $d = r(r+1)/2$, it is straightforward to check that the 
dimension $d^2 = r^2(r+1)^2/4$ of a locally tomographic composite $AA$ is smaller than 
that of a 
rank-$r^2$ real or complex composite. In contrast, for complex systems the relation $d = r^2$ is consistent with locally tomographic composites, which we know exist.
\end{proof}

Note that this proof only works because we are dealing with simple algebras (due to our assumption of continuous pure transitivity).
For non-simple algebras, just dimension and rank counting is not enough: if we tensor a direct sum of 3 Albert algebras with itself, its composite should have rank 81 and dimension $81^2$. But these are exactly the parameters of $M_{81}(\C)_{\text{sa}}$. It is likely there are other ways to rule out these possibilities, but we don't know how to do it without making additional assumptions. We will see one way to do this in Theorem~\ref{thm:non-simple-algebras}.

\section{Variations on ruling out non-complex Jordan-algebraic systems}

Our result characterising quantum systems used various additional assumptions that we can vary: we could require just pure transitivity instead of continuous pure transitivity, or drop the local tomography requirement. 
When we do so, we add some other assumptions in order to obtain interesting
results.

First, a result where we don't require local tomography, at the cost of explicitly requiring the composite to preserve purity of states.
\begin{theorem*}[Restatement of Theorem~\ref{thm:real-and-complex}]
    Let $A$ be a system in a GPT satisfying homogeneity and continuous pure transitivity, with unrestricted effects, 
    and suppose that composites preserve 
    purity (i.e.~a composite of pure states or pure effects is pure).
    Then $V_A$ is either a complex or real matrix algebra.
\end{theorem*}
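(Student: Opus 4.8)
The plan is to first reduce to the simple-algebra case and then rule out every simple Euclidean Jordan algebra except the real and complex matrix algebras. By Corollary~\ref{cor:homogeneity-simple}, homogeneity together with continuous pure transitivity forces $V_A$ to be a simple EJA, so $V_A$ is a real, complex, or quaternionic matrix algebra, a spin factor, or the $3\times3$ octonionic Albert algebra. Writing each such algebra in terms of its rank $r$ and its off-diagonal multiplicity $a$ (with $a=1,2,4,8$ in the real, complex, quaternionic and octonionic cases, and $a=\dim V_A-2$ for a rank-$2$ spin factor), the target statement is exactly that $a\in\{1,2\}$. I would stress at the outset that, unlike in Theorem~\ref{theorem: local tomography 1}, the dimension relation $\dim V_{AA}=(\dim V_A)^2$ is unavailable here, and indeed must fail: real quantum theory is not locally tomographic, yet it is permitted by the conclusion. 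So the mechanism excluding $a\ge3$ has to spare both $a=1$ and $a=2$, which rules out any argument that merely fixes the dimension of the composite.

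Next I would build the one concrete tool that survives the loss of local tomography. Fix a frame $\omega_1,\dots,\omega_r$ of orthogonal pure states of $A$, and, using unrestricted effects and self-duality, take the dual frame of pure effects $e_1,\dots,e_r$ satisfying $\inn{\omega_i,e_j}=\delta_{ij}$, exactly as in the proof of Theorem~\ref{theorem: local tomography 1}. Since composites preserve purity of states and of effects (the hypothesis, in the style of Lemmas~\ref{lem:pure-states} and~\ref{lem:pure-effects}), each $\omega_i\otimes\omega_j$ is a pure state of $AA$, each $e_i\otimes e_j$ is a pure effect, they are perfectly distinguished because $\inn{\omega_i\otimes\omega_j,\,e_k\otimes e_l}=\delta_{ik}\delta_{jl}$, and they resolve the unit since $\sum_{i,j}e_i\otimes e_j=u\otimes u=u_{AA}$. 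This furnishes the composite with a frame of $r^2$ orthogonal pure states and dual pure effects, now obtained directly from purity preservation rather than from local tomography.

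The heart of the argument is then to convert purity preservation into a bound on $a$. Rather than counting the global dimension of $AA$, which we cannot control, I would localize to two-level subsystems: the face of $\Omega_A$ generated by two orthogonal pure states is a spin factor whose normalized states form a ball of dimension $a+1$, so $a$ is an intrinsic invariant of these ``qubit'' faces. The plan is to examine how product pure states and product pure effects populate the corresponding qubit-by-qubit subcomposite and to show that the self-dual Jordan structure it must carry closes consistently only for $a\le2$; the transition-probability identity, in which the composite transition probability is the product $\inn{\omega,e_{\omega'}}\inn{\psi,e_{\psi'}}$ of the factor transition probabilities, is the natural place to extract the constraint. As a consistency check on which cases survive, the exotic spin factors ($a\ge3$, $r=2$), the quaternionic algebras ($a=4$), and the Albert algebra ($a=8$) should each be excluded while $a=1$ (real) and $a=2$ (complex) remain, recovering the stated dichotomy; the quaternionic and octonionic exclusions can alternatively be read off from the known failure of these algebras to admit purity-preserving composites.

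The main obstacle is precisely the absence of local tomography: the composite $AA$ is not \emph{a priori} a simple EJA, its pure states need not all be products, and continuous pure transitivity need not descend to it, so the clean rank-and-dimension count of Theorem~\ref{theorem: local tomography 1} is not directly available. The delicate point is to extract the inequality $a\le2$ from purity preservation acting only on the product substructure we can actually control, while being careful not to invoke any step that would also exclude the non-locally-tomographic real case $a=1$. Separating $a=2$, which must be kept, from $a=3$, which must be excluded, is where the real work lies.
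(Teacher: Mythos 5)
Your reduction to a simple EJA and your construction of the $r^2$ orthogonal pure product states with dual pure product effects (hence rank of the composite equal to $r^2$) match the paper exactly, but after that point you abandon the argument the paper actually uses, and the replacement you sketch is never carried out. The paper's key observation is that the composite $AA$ is itself a system of the theory, so the theory-wide hypotheses of homogeneity and continuous pure transitivity apply to it as well: $V_{AA}$ is therefore \emph{also} a simple EJA, of rank $r^2$ by your frame construction, and of dimension at least $(\dim V_A)^2$ because the canonical map $V_A\otimes V_A\to V_{AA}$ is injective. That inequality (in place of the equality that local tomography would give) plus the classification table of simple EJAs is the whole proof: a rank-$9$ simple EJA of dimension at least $729$ does not exist, which kills the Albert algebra, and the quaternionic algebras of rank at least $3$ and the spin factors of dimension greater than $4$ die the same way, while the real case survives precisely because the constraint is only an inequality. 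Your stated reasons for rejecting this route --- that ``the composite $AA$ is not \emph{a priori} a simple EJA'' and that ``continuous pure transitivity need not descend to it'' --- are mistaken readings of the hypotheses, which (as in Theorem~\ref{theorem: local tomography 1}) are meant to hold for every system of the GPT, the composite included.

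What you propose instead --- localizing to rank-two faces, treating the Peirce multiplicity $a$ as an invariant of ``qubit'' faces, and extracting $a\le 2$ from the product transition-probability identity --- is only a program. No inequality on $a$ is actually derived, the mechanism by which the ``self-dual Jordan structure closes consistently only for $a\le 2$'' is not identified, and the fallback you offer (that quaternionic and octonionic algebras are ``known'' not to admit purity-preserving composites) is essentially the statement to be proved, so it cannot be invoked. You also correctly flag that the hard point is separating $a=2$ from $a=3$, but you leave that point unresolved. As it stands the proposal has a genuine gap at its central step; the fix is to notice that the composite is itself constrained by the same single-system hypotheses, which restores the rank-and-dimension count you discarded.
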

\begin{proof}
    Since we have continuous pure transitivity, $V_A$ is a simple Jordan algebra. Since we don't have local tomography, we can't use Lemmas~\ref{lem:pure-states} and~\ref{lem:pure-effects}, which is why we explicitly assume that composites preserve pure states and
    effects.  We again consider  a frame
    $\omega_i, i \in \{1,..r\}$ and corresponding pure effects $e_j$ that perfectly 
    distinguish them.  
    We can then proceed in the same way as in the proof of Theorem~\ref{theorem: local tomography 1}, reducing the problem to dimension- and rank-counting. 
    But now, since we don't have local tomography, the dimension restrictions are less severe, requiring only that $\dim V_{AA} \geq (\dim V_A)^2$, instead of strict equality. This however still allows us to rule out most of the possibilities.

    For instance, if $V_A$ is the rank-3 Albert algebra of $3\times 3$ self-adjoint octonion matrices, which is 27-dimensional, then the composite would need to be a rank 9 simple algebra of dimension at least $27^2=729$ (because composites are injective). But such a simple algebra doesn't exist. Quaternionic systems of rank at least 3 and spin-factors of dimension $>4$ can be ruled out in a similar way. This then leaves the real and complex systems as options, as desired.
\end{proof}

We can weaken the assumption on continuous pure transitivity to allow non-simple algebras. To do this, we use another assumption on composites, which concerns \emph{classical effects}. These are effects $e$ which for any pure state $\omega$ give $\inn{\omega ,e} = 1$ or $\inn{\omega, e} = 0$. That is, they hold with certainty, or they do not hold at all. In a classical GPT all pure effects are classical, while in a pure quantum system the only classical effect is the trace. In a Jordan algebra the classical effects precisely correspond to the different factors of the Jordan algebra. We say a composite is \emph{classicality preserving} when a composite of classical effects is again classical.
\begin{theorem}\label{thm:non-simple-algebras}
    Let $A$ be a system in a GPT satisfying homogeneity and pure transitivity, and having unrestricted effects. Suppose furthermore that composites are purity preserving 
    (on both states and effects)
    and classicality preserving.
    Then $V_A$ is a direct sum of complex or real matrix algebras.
\end{theorem}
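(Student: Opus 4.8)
The plan is to mirror the strategy of Theorem~\ref{thm:real-and-complex}, but to use classicality preservation to isolate a single simple summand together with a genuine copy of its self-composite inside $V_{AA}$. First I would invoke Theorem~\ref{theorem: main} together with the corollary that a homogeneous space with pure transitivity is order-isomorphic to a direct sum $V_A \cong \bigoplus_{k=1}^n J$ of $n$ identical simple Euclidean Jordan algebras; write $r_0$ and $d_0$ for the rank and dimension of the simple factor $J$, and let $u_1$ be the unit effect of the first summand. Since all summands are isomorphic, it suffices to prove that the single simple factor $J$ is a real or complex matrix algebra.

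Next I would set up the relevant composite. In a direct sum of simple EJAs the classical effects are exactly the sums of units of subsets of the summands (on a simple factor, continuous pure transitivity makes $\omega\mapsto\inn{\omega,e}$ continuous on the connected set of pure states, so a $\{0,1\}$-valued effect is $0$ or the factor unit), hence $u_1$ is classical. By classicality preservation $u_1\otimes u_1$ is classical on $AA$; since $AA$ is itself homogeneous and pure transitive, $V_{AA}$ is a Jordan algebra, and the classical effect $u_1\otimes u_1$ determines a direct summand (superselection sector) $S\subseteq V_{AA}$, which is precisely the self-composite $J\otimes J$ of the first sector. I would then record three facts about $S$: it is a direct sum $\bigoplus_{i=1}^{p}K$ of identical simple EJAs; by purity preservation the products of a frame of $J$ form a frame of $S$, so $\operatorname{rank}(S)=r_0^2=p\,\operatorname{rank}(K)$; and the injection $V_J\otimes V_J\hookrightarrow V_S$ forces $\dim S\ge d_0^2$.

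The crux is a structural step that pins down the type of the simple summand $K$. Fixing a pure state $\omega_0$ of $J$, the map $x\mapsto x\otimes\omega_0$ embeds $J$ into $S$; by Lemma~\ref{lem:pure-marginal-state} its image is exactly the exposed face $\{\sigma\in S_+:\inn{\sigma,\,u_J\otimes(u_J-e_0)}=0\}$, where $e_0$ is a pure effect with $\inn{\omega_0,e_0}=1$, so $J$ is order-isomorphic to a face of $S$. Since $J$ is irreducible, and a face of a direct sum of cones is a direct sum of faces of the summands, this face must lie inside a single simple summand $K$. Thus $J$ is a face of the simple EJA $K$, and the classification of faces of simple EJAs forces $K$ to belong to the same division-algebra family as $J$: a face of $M_\rho(\mathbb{F})_{\mathrm{sa}}$ is some $M_t(\mathbb{F})_{\mathrm{sa}}$, a face of a spin factor has rank at most $2$, and a face of the Albert algebra has rank at most $3$.

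Finally I would combine $\operatorname{rank}(S)=r_0^2=p\,\operatorname{rank}(K)$, $\dim S\ge d_0^2$, and the now-determined type of $K$ in a dimension–rank count using the table of ranks and dimensions in \cite{FarautKoranyi}, eliminating every non-real, non-complex possibility. If $J$ were the rank-$3$ Albert algebra, then $K$ must also be the Albert algebra, forcing $p=3$ and $\dim S=81<729=d_0^2$. If $J$ were quaternionic of rank $r_0\ge 3$, then $K=M_\rho(\mathbb{H})_{\mathrm{sa}}$ with $\rho\mid r_0^2$ and $\rho\ge r_0$, giving $\dim S=r_0^2(2\rho-1)$, which meets $d_0^2=r_0^2(2r_0-1)^2$ only if $\rho\ge 2r_0^2-2r_0+1>r_0^2$, impossible. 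If $J$ were a spin factor of dimension $d_0>4$, the embedded face forces $K$ to be a spin factor or quaternionic of rank $2$, and in each case the rank-$2$ constraint makes $\dim S$ fall short of $d_0^2$ (for $K$ the spin factor itself, $p=2$ and $\dim S=2d_0<d_0^2$). Only the real and complex matrix algebras remain consistent, so $J$, and hence every summand of $V_A$, is a real or complex matrix algebra. I expect the face-embedding step to be the main obstacle: naive rank-and-dimension counting cannot by itself exclude $S$ being assembled from low-rank spin factors, and it is precisely this irreducible-face argument—made available because classicality preservation turns $u_1\otimes u_1$ into an honest direct summand of a Jordan algebra—that closes that gap and fixes the type of $K$.
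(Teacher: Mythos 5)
Your overall architecture (reduce to a single simple summand $J$, locate the sector $S$ of $V_{AA}$ cut out by the classical effect $u_1\otimes u_1$, compute $\mathrm{rank}(S)=r_0^2$ and $\dim S\ge d_0^2$, then count) matches the paper up to the crucial step, and you correctly diagnose that rank--dimension counting alone cannot finish the job when $S$ may split into many low-rank, high-dimensional spin factors. But the step you use to close that gap --- the claim that $x\mapsto x\otimes\omega_0$ maps $J_+$ \emph{onto} the exposed face $\{\sigma\in S_+:\inn{\sigma,\,u_J\otimes(u_J-e_0)}=0\}$, so that $J$ is a face of a single simple summand $K$ --- is not available here. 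You justify it by Lemma~\ref{lem:pure-marginal-state}, but as the paper notes, that lemma only holds when the composite's state cone is contained in the maximal tensor product $V_A\omax V_A$, which is guaranteed by local tomography; Theorem~\ref{thm:non-simple-algebras} deliberately drops local tomography, and $\dim V_{AA}$ may exceed $(\dim V_A)^2$. Without it, a state of $S$ whose $B$-marginal is the pure state $\omega_0$ need not equal $x\otimes\omega_0$ (only its locally tomographic shadow is forced to be a product), so the exposed face may strictly contain the image of $J$. Then $J$ is merely order-embedded in $S_+$, not a face of $K$, the classification of faces of simple EJAs cannot be invoked, and the type of $K$ remains undetermined --- precisely the case (e.g.\ $J$ quaternionic of even rank, $S$ a sum of rank-$2$ spin factors of large dimension) that your own counting cannot exclude.

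The paper closes this gap differently and without any marginal/face argument: for a simple EJA one can choose a pure state $\omega$ and corresponding pure effect $e$ with $\inn{\omega_i,e}\neq 0$ for \emph{every} frame element $\omega_i$ (a ``uniform superposition''). By purity preservation $e\otimes e$ is a pure effect of $V_{AA}$, hence lies in a single simple factor, and since $\inn{\omega_i\otimes\omega_j,e\otimes e}\neq 0$ for all $i,j$, all $r^2$ product frame states lie in that same simple factor $E$. Classicality preservation then forces $u_W\otimes u_W=\sum_{ij}e_i\otimes e_j$ to be the unit of $E$, so $S=E$ is \emph{simple} of rank exactly $r^2$ and dimension at least $d^2$, and the counting from Theorem~\ref{thm:real-and-complex} applies directly (a simple algebra of rank $r^2\ge 4$ cannot be a spin factor or the Albert algebra, which is what kills the problematic cases). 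If you want to salvage your route, you would need to either prove the face property of product states without local tomography (which appears to require extra assumptions) or replace the face-embedding step with an argument of this ``full-support pure effect'' type. A minor additional point: you invoke \emph{continuous} pure transitivity to argue classical effects are $0$ or $1$ on a simple factor, but the theorem assumes only pure transitivity; this is harmless since the pure states of a simple EJA form a connected set regardless.
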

\begin{proof}
    Since $A$ satisfies homogeneity and pure transitivity, $V_A$ must be a direct sum of copies of the same simple Jordan algebra $W$. Let $\omega_1,\ldots,\omega_r$ be a frame of $W$ and let $e_1,\ldots, e_r$ be the corresponding pure effects such that $\inn{\omega_i,e_j} = \delta_{ij}$.
    A property of simple Jordan algebras is that for every frame we can find a pure state that has non-zero overlap with all the states in the frame. That is, there exists a pure state $\omega$ and corresponding pure effect $e$, such that $\inn{\omega_i,e} \neq 0$ for all $i$. For instance, if $W=M_r(\C)_{\text{sa}}$ and $\omega_i = \ket{i}$, then we can take $\omega = \frac{1}{\sqrt{r}}\sum_i \ket{i}$.

    For any fixed copy of $W$ in $A$, all these states and effects correspond to pure states and effects of $A$.  Let $W$ be an arbitrary such copy.   Then 
    $u_W = \sum_i e_i$ is a classical effect in $A$, because it corresponds to one of its simple summands.
    Consider the composite $A\otimes A$ and its associated Jordan algebra $V_{AA}$. Because we assume purity preserving composites, the states $\omega_i\otimes \omega_j$ and effects $e_i\otimes e_j$ are pure. We see that $\inn{\omega_i\otimes \omega_j, e\otimes e} = \inn{\omega_i,e}\inn{\omega_j,e} \neq 0$. There is hence a pure effect $e\otimes e$ that has non-zero overlap with all the pure states $\omega_i\otimes \omega_j$. This is only possible if all $\omega_i\otimes \omega_j$ belong to the same simple factor of the Jordan algebra $V_{AA}$. Let's call this factor $E$.
    Since $\inn{\omega_i\otimes \omega_j,e_k\otimes e_l} = \inn{\omega_i,e_k}\inn{\omega_j,e_l} = \delta_{ik}\delta_{jl}$, these pure states are still orthogonal, so that the rank of $E$ is at least $r^2$.

    All the pure effects $e_i\otimes e_j$ also belong to $E$.  Since $\sum_{ij} e_i\otimes e_j = u_W\otimes u_W$ is a tensor product of classical effects, this is then also classical, which means it must be the unit effect of this simple summand (since the unit is the only classical effect in a simple Jordan algebra). The rank of $E$ is hence $r^2$. Furthermore, because the tensor product is injective, and all of $W\otimes W$ is mapped into $E$, we must have $(\dim W)^2 \leq \dim E$. With the same argument as  in the proof of Theorem \ref{thm:real-and-complex} we can then rule out all possibilities for $W$ except  the real and complex matrix algebras.
\end{proof}
\hbox{}

We may also obtain the same conclusion as in 
Theorem \ref{thm:non-simple-algebras} if instead of classicality preservation
for effects, we make a classicality preservation assumption 
on the state space: that the composition of state spaces 
commutes with direct sum decompositions of them (as ordered 
linear spaces): 
$V^A = V^A_1 \oplus V^A_2 \implies V^{AB} = V^{A_1 B} \oplus V^{A_2 B}$, where $V^{A_i B}$ is the state space of a
 composite of $A_i$ and $B$, and similarly for direct sum decompositions
 of $B$ instead of $A$.  Unrestricted effects will then give us the classical effects that distinguish these summands, along with classicality preservation for effects.  
 
Such classicality preservation assumptions are not redundant---they do not automatically follow from the notions of system and composite system we have been using.  When they do not hold, we can have states
that exhibit what one might call ``nonlocal coherence''.

\section{Standard quantum theory from relations between reversible dynamics and
conserved observables}

Besides local tomography, there is another class of properties that is known to narrow the class of Jordan-algebraic state spaces to just the complex quantum cases, by exploiting the one-to-one correspondence
(modulo nonzero real scalars),  
$x \mapsto -ix$, between generators of 
one-parameter subgroups of $\aut{\Omega}$, which are continuous
symmetries (and potential time evolutions), and traceless observables conserved by the symmetries, that
exists, for simple Jordan algebras, only in the complex case.  One such property is 
\emph{energy observability} \cite{BMU}, which requires the Lie algebra $\fk$
of $\aut{\Omega}$ to be nontrivial, and for there to be a linear injection $\phi$
of $\fk$ into $V^*$, whose range does not contain the ``trivial'' observable 
$u$, such that the observable $\phi(k)$ is conserved by 
the dynamics $x \mapsto \exp{\fk'} x$ (where $\fk'$ is the dual action of 
$\fk$ on the space $V^*$ of observables).  Nontriviality of $\fk$ (which is automatic
when we assume \emph{continuous} pure transitivity) also 
rules out the case of  
finite-dimensional classical theory.  The closely related postulates
of existence of a \emph{dynamical correspondence} \cite{AlfsenShultzOrientation,ASBook2} 
or of a \emph{Connes orientation} \cite{ConnesSelfDualCones} 
do not require a nontrivial Lie algebra, are defined also in infinite-dimensional
Jordan settings, and characterize the $C^*$ or 
$W^*$-algebraic cases, \emph{including} classical (commutative) ones.  
Besides the above references, more detailed discussion is in \cite{BMU} and \cite{BaHiSpectral}, 
and see \cite{BaezNoether} for a slightly different, and illuminating, perspective on such properties.    
As an alternative to assuming a nontrivial Lie algebra of symmetries, there are a variety of physically and/or informationally natural ways of ruling out classical theory,
such as: the existence of information-disturbance
tradeoffs (announced in \cite{Barrett07}), the impossibility of universal cloning or broadcasting
\cite{BBLW2007,BBLW06Cloning}, the nonuniqueness of decompositions of a
mixed state into pure states, 
and the existence of incompatible measurements \cite{PlavalaCompatible}. 
\end{appendix}

\end{document}